\newcommand{\ave}[1]{ \left\langle #1  \right\rangle}
\newcommand{\eat}[1]{}
\def \tr{{\textrm {Tr}}}
\providecommand{\U}[1]{\protect\rule{.1in}{.1in}}
\newtheorem{theorem}{Theorem}
\newtheorem*{theorem*}{Theorem}
\newtheorem{proposition}[theorem]{Proposition}
\newtheorem{remark}[theorem]{Remark}
\begin{document}
\title{Could Gaussian regenerative stations act as quantum repeaters?}

\author{Ryo Namiki$^1$, Oleg Gittsovich$^{1}$, Saikat Guha$^2$, and Norbert L{\" u}tkenhaus$^1$}
\affiliation{$^1$Institute for Quantum Computing and Department of Physics and Astronomy, University of Waterloo, Waterloo ON, Canada N2L 3G1\\
$^2$Quantum Information Processing group, Raytheon BBN Technologies, Cambridge MA, USA 02138}

\begin{abstract} 

Higher transmission loss diminishes the performance of optical communication---be it the rate at which classical or quantum data can be sent reliably, or the secure key generation rate of quantum key distribution (QKD). Loss compounds with distance---exponentially in an optical fiber, and inverse-square with distance for a free-space channel. In order to boost classical communication rates over long distances, it is customary to introduce regenerative relays at intermediate points along the channel. It is therefore natural to speculate whether untended regenerative stations, such as phase-insensitive or phase-sensitive optical amplifiers, could serve as repeaters for long-distance QKD. The primary result of this paper 　rules out all  bosonic Gaussian channels 
 to be useful as QKD repeaters, which include phase-insensitive and phase-sensitive amplifiers as special cases, for any QKD protocol. We also delineate the conditions under which a Gaussian relay renders a lossy channel {\em entanglement breaking}, which in turn makes the channel useless for QKD.
\end{abstract}

\pacs{03.67.Dd, 42.50.Lc} 
\keywords{quantum cryptography, quantum repeater, bosonic channel}
\maketitle

\section{Introduction}

In recent years, performance of various communication tasks over an optical channel---when limited only by the fundamental noise of quantum mechanical origin---have been extensively studied. A few examples are: finding the communication capacities of the lossy optical channel for transmitting classical information~\cite{Gio04}, quantum information~\cite{Wol98}, and that for transmitting both classical and quantum information simultaneously in the presence of a limited amount of pre-shared entanglement~\cite{Wil12}. One of the biggest breakthroughs in optical communication using quantum effects was the invention of quantum key distribution (QKD), which is a suite of protocols that can generate information-theoretically-secure shared secret keys~\cite{Scarani_Renner_2008} between two distant parties Alice and Bob over a lossy-noisy optical channel, with the assistance of a two-way authenticated public classical channel. Security of QKD leverages quantum properties of light to ensure the generated shared keys are secure from the most powerful adversary that is physically consistent with the channel noise collectively estimated by Alice and Bob (despite the fact that much of that noise may actually be caused by non-adversarial or natural causes). Various QKD protocols have been proposed in the last three decades~\cite{SBCDLP09}, some of which have been transitioning to practice~\cite{DARPA02,SECOQC09,TOKYO_QKD}.

For all the communication protocols discussed above, the rates decrease rapidly with channel loss.  For the task of classical communication over an ideal pure-loss channel (modeled by a beamsplitter of transmittance $\eta$), at any given value of the channel transmittance $\eta$, no matter how small, the data rate can in principle be increased without bound by increasing the input power~\cite{footnote1}. For QKD, this is not the case. For several well-known QKD protocols (such as BB84~\cite{BB84} with single photons and BB84 with weak laser light encoding and decoy states, E91~\cite{E91} with an ideal entanglement source, and CV-QKD with Gaussian modulation~\cite{Sil02, Gro03}), the secret key rate $R$ decays linearly with channel transmittance $\eta$ in the high-loss ($\eta \ll 1$) regime~\cite{footnote}. Recently, it was shown that this linear rate-transmittance scaling over the lossy bosonic channel---for secure-key generation with two-way public classical communication assistance---is  impossible to improve upon, no matter how one may design a QKD protocol, or how much input power is used~\cite{Tak13}. To be specific, the secret key rate of any QKD protocol must be upper bounded by $R_{\rm UB}$ measured in bits/mode and given by 
\begin{equation}
\label{TGWbound}
R_{\rm UB} = \log_2\frac{1+\eta}{1-\eta} ,
\end{equation} which equals $R_{\rm UB} \approx 2.88\eta$, for $\eta \ll 1$. This fundamental rate-loss upper bound also applies to the following related tasks: quantum communication (sending qubits noiselessly over a lossy channel), direct secure communication~\cite{footnote3}, and entanglement generation (where each task may also use assistance of a separate authenticated two-way classical communication channel, in addition to transmissions over the quantum lossy bosonic channel itself)~\cite{Tak13}.

As we discussed above, for classical communication over an ideal lossy channel, one could in principle increase the input power without bound as the loss increases, to maintain a required data rate. However, an unbounded input power is impractical both from the point of view of the availability of a laser that is powerful enough, and also to avoid hitting up against the fiber's non-linearity-driven peak power constraint. This is why traditionally, electrical regenerators have been used to compensate for loss in long-haul optical fiber communications, which help restore the signal-to-noise ratio (SNR) of the digitally-modulated signals by periodically detecting and regenerating clean optical pulses. Over the last few decades, all-optical amplifiers, such as erbium-doped fiber amplifiers (EDFAs), have become popular in lieu of electrical regenerators, both due to their greater speeds as well as the low noise of modern EDFAs. Caves analyzed the fundamental quantum limits on the noise performance of optical amplifiers~\cite{Cav82}, for both phase-insensitive (PIA) and phase-sensitive amplifiers (PSA). Loudon analyzed the fundamental limitations on the overall SNR to `chains' of loss segments and optical amplifiers, both in the context of phase-sensitive (coherent detection) receivers, as well as direct detection receivers~\cite{Lou85}.

For QKD, one way to beat the linear rate-transmittance scaling  is to break up the channel into low-loss segments by introducing physically-secured center stations; in this approach the overall key rate is still upper bounded by $R \le \log_2[(1+\eta^\prime)/(1-\eta^\prime)]$ bits/mode, but $\eta^\prime$ is the transmittance of the longest (lossiest) segment. Quantum repeaters are conceptual devices~\cite{SST_2011, LST_2009}, which if supplied at these intermediate stations, can beat the linear rate-transmission scaling without having to physically secure them.  There is an approach  to build a quantum repeater using one-way communication only~\cite{Mur14}, so they can act as passive untended devices. However, such structured implementations of those devices require quantum error correction codes operating on blocks of multiple qubits.  A recently-proposed repeater protocol~\cite{Azu13} even eliminates the requirement of a quantum memory, but utilizes photonic cluster states.  Building a functional quantum repeater  is subject to intensive fundamental research, but is currently far from being a deployable technology. The natural question that thus arises---in analogy to Loudon's setup for classical optical communication~\cite{Lou85}---is whether all-optical amplifiers (PIAs or PSAs), left untended and inserted at regular intervals, might act to some degree as quantum repeaters and thereby help boost the distances over which QKD can be performed over a lossy channel.

{
The remainder of the paper is organized as follows.  First, in Section~\ref{sec:MainResults} we summarize the main results derived in this article to put it into perspective. A central finding is a decomposition of a lossy quantum channel with intermediate  bosonic Gaussian channel stations into another form without any insertion of middle stations as depicted in Fig.~\ref{fig:summary}.  We then continue into the technical part.
 In Section~\ref{sec:Gaussian}, we give an overview of bosonic Gaussian states and channels. In Section~\ref{sec:GaussianStations}, we analyze the scenario when a general multi-mode Gaussian channel is inserted between two pure-loss segments, and show how one can collect the entire pure loss in the center of the channel by appropriate modifications to the transmitter and the receiver. In Section~\ref{sec:EBconditions}, we consider single-mode Gaussian stations, and delineate the conditions for when the Gaussian center station renders the concatenation with the losses on its two sides, an entanglement-breaking channel. The quantum limited stations, the PSA and the PIA, are addressed as special cases. We conclude in Section~\ref{sec:conclusions} with a summary of the main results, and thoughts for future work.
} 
\begin{figure}[htb]
\centering
\includegraphics[width=\columnwidth]{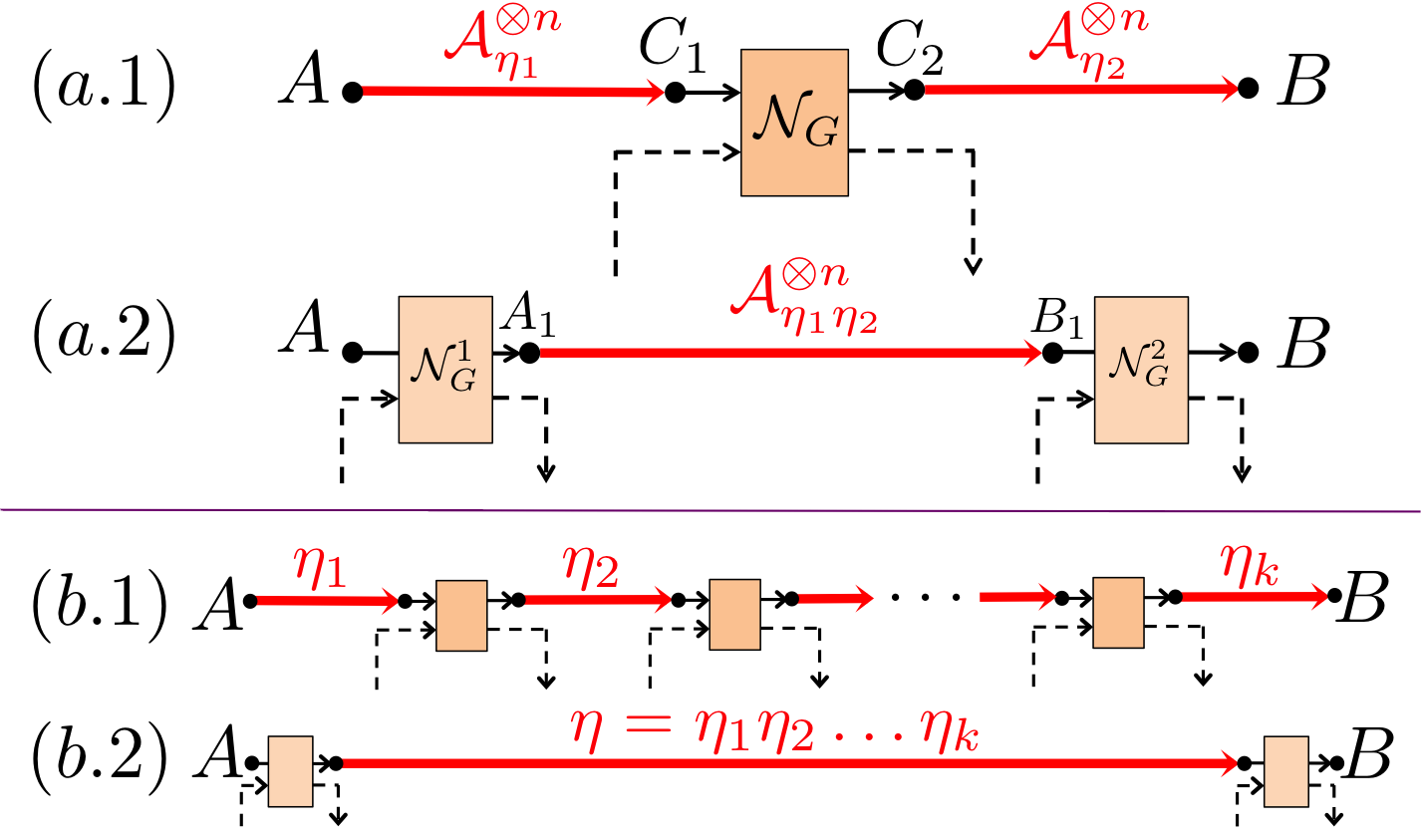} 
\caption{(a) Any $n$-mode Gaussian channel ${\cal N}_G$ sandwiched between two pure-loss channel segments ${\cal A}_{\eta_1}^{\otimes n}$ and ${\cal A}_{\eta_2}^{\otimes n}$, respectively, can be decomposed into  a single lossy channel ${\cal A}_{\eta}^{\otimes n}$ sandwiched by a pair of Gaussian channels, ${\cal N}_G^1$ and ${\cal N}_G^2$. The net loss in the channel is the sum (in dB) of the losses of the two individual lossy segments, i.e., $\eta = \eta_1\eta_2$ and the Gaussian channel at the receiver end ${\cal N}_G^2$ is a Gaussian unitary map. (b) Using this transformation recursively, one can `push' a collection of general Gaussian center stations interspersed through a lossy channel (b.1) to a single Gaussian operation at the input, and a single Gaussian operation at the output (b.2) of the entire loss accumulated in the center.}
\label{fig:summary}
\end{figure}

\section{Outline of main results} \label{sec:MainResults}

In this paper, we show that such is not possible when those all-optical amplifiers are limited to 
Gaussian channels. Note that by using the word ``channel'' we automatically imply the action to be trace-preserving. Examples of such channels involve beamsplitters, phase-shifters and squeezers~\cite{Bra05}. We prove our claim  by transforming a concatenation of two lossy channel segments with a Gaussian channel in the middle, as a pair of Gaussian channels at the two ends, with the total loss collected in the middle (see Fig.~\ref{fig:summary}). The implication of our no-go result is that if  Gaussian channels are employed in center station(s) placed along a lossy channel, the overall QKD key rate, for any QKD protocol, must be upper bounded by $R_{\rm UB} = \log_2[(1+\eta)/(1-\eta)]$ bits/mode, with $\eta$ being the total end-to-end channel transmittance.  
 {Simple protocols such as laser-decoy-based BB84, or Gaussian-modulated laser-based CV protocols, with no repeaters, can already attain key rates that have the optimal (linear) rate-transmittance scaling and are only a small constant factor below the general upper bound~\cite{SBCDLP09}.}

 For any optical communication protocol over a lossy channel interspersed with Gaussian stations, our result shows there exists {\em another} protocol with the same performance that does not use any intermediate station, which can be derived from the original protocol by suitably amending the transmitted signals and the receiver measurement.   Our result does {\em not} preclude a Gaussian channel in the middle of the lossy channel to improve the performance of a {\em given} protocol, {\em if} the transmitter and receiver are held to be the same. Nor does it preclude the existence of scenarios where it might be technologically easier to implement a protocol with such intermediate stations, as opposed to modifying the transmitter and the receiver per the prescription generated by our analysis. An example of such improvement is the increased range of a QKD protocol with a given level of detector noise (although, any increase in range must be consistent with the $R \sim \eta$ rate-transmission scaling).

{
 Given that the overall rate-transmission scaling can not be changed, the question remains whether there might be other implementation advantages of Gaussian center stations. It turns out that there are strict conditions on such a scenario. To demonstrate this, }we delineate the conditions under which a Gaussian center station causes a lossy channel to become entanglement breaking (EB) \cite{Horo03,Hol07,Hol08}. It is well known that QKD is not possible on an EB channel, since the output of an EB channel can be simulated quantitatively correctly using a measure-and-prepare scheme~\cite{Cur04}. The pure lossy channel is not EB by itself for any non-zero transmittance, $\eta >0$. 

Let us illustrate our reasoning for the better known case of classical communication over pure-loss bosonic channels. The channel capacity of the lossy bosonic channel (described by single photon transmittance $\eta$) using signals with mean photon number $\bar{n}$ per mode, is given by  $g(\eta \; {\bar n}) = (1+\eta \; {\bar n})\log_2(1+\eta\; {\bar n}) - \eta \; {\bar n}\log_2 \eta \;{\bar n}$ bits per mode~\cite{Gio04}. We see, that increasing the mean photon number increases the classical communication rate. In practice, it is impractical to keep increasing the mean photon number due to non-linear effects in the fiber that limit the input power and can distort the signals. For these reasons, one limits the input power and builds optical amplifiers (phase sensitive or phase insensitive) into the fiber. According to our theorems, for the ideal loss-only bosonic channel, the setup of lossy segments with intermediate amplifiers is equivalent to a new transmitter consisting of the old transmitter combined with a very strong amplifier, followed by a transfer through the full distance of the lossy bosonic channel, and then a  receiver consisting of a combination of another amplifier and 
on the original
 receiver. 
 This replacement protocol corresponds to the situation of using a large input mean photon number, and realizes the classical capacity of the lossy bosonic channel. What we learn is that the intermediate amplifiers do not increase the channel capacity of the lossy bosonic channel, but realize an equivalent protocol that keeps the optical signals---throughout the communication channel---within a peak power level that is sufficiently below the level where non-linear effects would be encountered.

In QKD, the secrecy capacity of the lossy bosonic channel does not increase unboundedly with the input power of the signals, thus using strong signal pulses pushing into the non-linear domain of fibers is not important for QKD protocols: the use of equivalent replacement schemes utilizing optical amplifiers would not give any advantage. To the contrary, amplifiers will add additional noise which will eventually be detrimental to the performance of the QKD protocol, with the exception of effects of noisy pre-processing that can increase the secret key rate compared to protocols not using this approach~\cite{renner05b}. Note however that noisy pre-processing cannot improve on the fundamental secrecy capacity of the lossy bosonic channel, which is solely a function of the channel's end-to-end loss.

Our main result adds to the list of no-go results for Gaussian operations in quantum information protocols, i.e., those that cannot be performed with Gaussian operations and classical processing alone. Some examples are universal quantum computing~\cite{Bartlett2002}, entanglement distillation of Gaussian states~\cite{Eisert2002,Fiurasek2002,Giedke2002}, optimal cloning of coherent states~\cite{Cerf2005}, optimal discrimination of coherent states~\cite{Takeoka2008,Tsujino2011,Wittmann2010-1,Wittmann2010-2}, Gaussian quantum error correction~\cite{Niset2009}, and building a joint-detection receiver for classical communication~\cite{Tak14}.

\section{Gaussian states and channels}
\label{sec:Gaussian}
In this section, we will provide a basic introduction to the mathematics of Gaussian states and channels, sufficient to develop the results in this paper. For a more detailed account, see Ref.~\cite{Giedke2002}.  A quantum state $\rho$ of an $n$-mode bosonic system is uniquely described by its characteristic function
\begin{equation}
\label{eq:characteristic_function}
\chi (\mu) = {\rm Tr} \left[ \rho \mathcal{W} (\mu) \right] ,
\end{equation}
where the Weyl operator, $\mathcal{W} (\mu) = \exp \left[ - i \mu^T R \right]$, with $R=[ \hat{x}_1, \cdots , \hat{x}_n , \hat{p}_1, \cdots , \hat{p}_n]^T$ consisting of field quadrature operators of the $n$ modes satisfying the commutation relations $[\hat{x}_k, \hat{p}_l]=i \delta_{kl}$, with $\mu =[ \mu_1, \cdots , \mu_{2n} ]$ a $2n$-length real vector. The characteristic function of a Gaussian state $\rho$ is given by, 
\begin{equation}
\label{eq:chi_gaussian}
\chi_\rho (\mu) = \exp\left[ - \frac{1}{4} \mu^T \gamma \mu + i d^T \mu \right] ,
\end{equation}
where the $2n \times 2n$ matrix $\gamma$ is the covariance matrix (CM) and the $2n$-length vector $d:= (\ave{\hat x} , \ave{\hat p}  )^T$ is the mean, or the displacement vector (DV), of $\rho$.   The Gaussian state $\rho$ can thus be described uniquely by the pair $(\gamma, d)$.
Due to the canonical uncertainty relation, any CM of  physical states  has to satisfy  
\begin{eqnarray}
\gamma \ge \frac{i}{2} \sigma ,   \label{PhysConState} 
\end{eqnarray} where  \begin{equation}
\label{eq:S_decomposition}
\sigma : =  \left( 
\begin{array}{cc}
0  & \openone _ n \\
-\openone _ n & 0
\end{array}
\right).
\end{equation}
 A Gaussian unitary operation $U_G$ transforms a Gaussian state $(\gamma, d)$ to a Gaussian state $(\gamma^\prime, d^\prime)$  as  
\begin{equation}
\gamma^\prime = M^T \gamma M, \quad d^\prime = M^T d, 
\label{eq:symplectic}
\end{equation}
where $M$ is a symplectic matrix that satisfies   \begin{eqnarray}
M ^T  \sigma M = \sigma   . \label{Guni} 
\end{eqnarray} 
A Gaussian channel ${\cal E}$ can be described by a triplet $(K,m,\alpha)$~\cite{Hol08}. It transforms a state $(\gamma, d)$ to the state $(\gamma^\prime, d^\prime)$ as 
\begin{eqnarray}
\gamma^\prime = K^T \gamma K + \alpha, \quad d^\prime = K^T d +m. 
\end{eqnarray}
From the regularity of CMs in Eq.~\eqref{PhysConState}  the physical condition for the pair $(K,\alpha)$ is given by 
\begin{eqnarray}
\alpha \ge \frac{i}{2}(\sigma - K^T \sigma K )  \label{PhysCon}. 
\end{eqnarray}
Composition of two Gaussian channels $\mathcal E_1$ and $\mathcal E_2$ yields another Gaussian channel $\mathcal E_{12} =  \mathcal E_2 \circ \mathcal E_1$, where
\begin{eqnarray}
K_{12} &=& K_1 K_2 \nonumber, \\
m_{12} &=& K_2^T m_1 + m_2,  \,{\text{and}}\nonumber \\  
\alpha_{12} &=& K_2^T \alpha _1 K_2 + \alpha_2. \label{CompoRule}
\end{eqnarray}
In this paper, we will focus on Gaussian channels with $m=0$.  
 In Appendix~\ref{app:displacement} we show an explicit calculation demonstrating how   mean displacement terms can be separated out in any concatenation of Gaussian channels.  

 In the following subsections, we will delve a little deeper into properties of single-mode Gaussian  channels that we use later on.

\subsection{Decomposing a Gaussian unitary operation}
The symplectic matrix $M$ in Eq.~\eqref{eq:symplectic} of a Gaussian unitary can always be decomposed as 
\begin{equation}
\label{eq:S_decomposition}
M = B \left( 
\begin{array}{cc}
\Lambda & 0 \\
0 & \Lambda^{-1}
\end{array}
\right) B' ,
\end{equation}
where $\Lambda$ is a positive diagonal matrix, and $B$, $B'$ are orthogonal symplectic matrices ($B^T = B^{-1}$)~\cite{Bra05}. This implies that any $n$-mode Gaussian unitary operation $U_G$ can be realized by a passive linear optic circuit $B$ (a circuit involving only beamsplitters and phase-shifters~\cite{Rec94}), followed by $n$ parallel (tensor-product) single-mode squeezers, followed by another $n$-mode passive linear optic circuit $B'$~\cite{Bra05}. Therefore a general Gaussian unitary operation can always be decomposed into passive linear optics (beamsplitters and phase shifters), single-mode squeezing and single-mode displacement operations.

Therefore, up to a displacement, a single-mode Gaussian unitary (described by its symplectic matrix $M$) can be decomposed as 
\begin{eqnarray}
M  =  R_\theta  S_{G}  R_\phi,
\end{eqnarray}
where 
\begin{eqnarray}
R_\theta &=& \left(
  \begin{array}{cc}
 \cos \theta   &  \sin \theta    \\
  -\sin\theta  & \cos \theta   \\
  \end{array}
\right)
\end{eqnarray}
is the symplectic matrix of a single-mode phase rotation, and 
\begin{eqnarray}
S_{G} &=& \left(
  \begin{array}{cc}
    \sqrt G  +\sqrt{G -1}   & 0   \\
       0 &   \sqrt G  -\sqrt{G -1}   \\
  \end{array}
\right)  \end{eqnarray} 
is the symplectic matrix of a single-mode (phase-quadrature) squeezer. Note that it is sufficient to restrict the above decomposition to a phase-quadrature squeezer, is because one can absorb any additional phase in the squeezing operation into $R_\theta$ and $R_\phi$. This is because the symplectic matrix of a single-mode squeezer with gain $G$ and squeezing angle $\theta'$ can be expressed as 
\begin{eqnarray}
S_{G, \theta'} =  R_{\theta'} S_{G} R_{\theta'} ^\dagger.
\end{eqnarray} 

\subsection{Entanglement breaking channels} 

An {\em entanglement breaking} (EB) channel is one whose action on one half of an entangled state (with an identity map on the other half) always yields a {\em separable} state. An EB channel  can always be written in a {\em measure-and-prepare} form~\cite{Horo03, Hol08}. (See also Eq.~(\ref{eq:measureprepare}) below.) Any concatenation of $n$ (not necessarily Gaussian) channels, ${\cal E}_n \circ \ldots \circ {\cal E}_2 \circ {\cal E}_1$ is EB if one of channels ${\cal E}_i$ is EB. It is instructive to see the argument explicitly for $n= 3$. Consider the serially-concatenated channel, ${\cal E}_t= {\cal E}_3  \circ {\cal E}_2 \circ {\cal E}_1$, where the center station ${\cal E}_2 $ is EB. Supposing its measure-and-prepare form is given by  ${\cal E}_2  (\rho ) = \sum_k  \tr(M_k \rho )\sigma_k$,  with $M_k \ge 0$ and $\sigma_k \ge 0 $, we can write ${\cal E}_t$ in a measure-and-prepare form,
\begin{equation}
\label{eq:measureprepare}
{\cal E}_t  (\rho) =  \sum_k  \tr[M_k {\mathcal E}_1 (\rho) ] {\mathcal E}_3(\sigma_k) =   \sum_k  \tr[M'_k \rho ] \sigma_k ',
\end{equation}
where $ \sigma_k '= {\mathcal E}_3(\sigma_k) \ge 0 $, and it is straightforward to show that $M'_k = \sum_i A_i^\dagger M _k A_i \ge 0$, where $\{A_i\}$ represent Kraus operators of $\mathcal E_1$, i.e., ${\mathcal E}_1(\rho) = \sum_i A_i \rho A_i^\dagger$.

The measure-and-prepare representation of an EB channel implies that the channel's quantum transmission can be seen as transmission of the (probabilistic) classical information obtained as a result of a hard quantum measurement made on the channel's input. This is the intuition behind why such a channel has zero secret-key capacity, and thus cannot be useful for QKD~\cite{Cur04}. Because of this reason, when we analyze concatenations of several Gaussian center stations for potential use as repeaters, we will limit our discussion to the case when all the channels $\mathcal E_i$ in the concatenation are non EB (since this is a necessary condition for QKD). Note however that when interspersed with loss segments, even when all center stations are non-EB, the overall input-output map can become EB---a topic that we will discuss in more detail later in Section~\ref{sec:EBconditions}.

\subsection{Unitary-equivalence classification for single-mode Gaussian channels}\label{sec:classification}
Our analysis of general one-mode Gaussian operations will be based on the standard forms of such operations obtained from the unitary equivalence classification of quantum channels developed by Holevo~\cite{Hol07, Hol08}. We say that two quantum channels $\Phi$ and $\Phi_S$ are \textit{unitary equivalent} if there exist unitary operators $U_V, U_W$ such that,
\begin{eqnarray}
\Phi_S (\rho) = U_W \Phi  (U_V \rho U_V ^\dagger) U_W^\dagger. 
\label{eq:unitary_equivalence}
\end{eqnarray}
If $U_V$ and $U_W$ above are Gaussian, we say $\Phi$ and $\Phi_S$ are {\em Gaussian unitary equivalent}. If a single-mode Gaussian channel ${\cal E} \triangleq (K, m, \alpha)$ is not an EB channel, it must be Gaussian unitary equivalent to a channel belonging to one of the following two classes:  \\

\noindent {\bf (i) Phase insensitive channel (PIC)}: This class of channels is described by the triplet $(K, 0, \alpha)$, with
\begin{eqnarray}
K &=& \sqrt \kappa \openone_2, \,{\text{and}}\nonumber \\
\alpha &=&(|1-\kappa|/2 + N )\openone_2 , \label{i}
\end{eqnarray} 
where $N \ge 0$ is the excess noise parameter and $\kappa \ge 0$ is a gain parameter. We will denote 
 this channel as ${\cal A}_\kappa^N$. It acts on the canonical quadratures phase-insensitively. When the gain $\kappa \ge 1$, we call it the {\em phase-insensitive amplifier} (PIA). When $\kappa < 1$, we call it the lossy bosonic channel (with excess thermal noise $N$). In this case, $\kappa$ is the channel's transmittance, the fraction of the input photons that appear at the channel's output. We will use the shorthand notation, ${\cal A}_\kappa \equiv {\cal A}_\kappa^0$ for a quantum-limited phase-insensitive amplifier, or a pure-loss channel, for $\kappa \ge 1$ and $\kappa < 1$, respectively. 

It is known that the PIC is EB if and only if~\cite{Hol08}, 
\begin{eqnarray}
N  \ge \min (1, \kappa). 
\label{38}
\end{eqnarray}
In our analysis we will assume that the PIC is not EB, i.e., $N \in [0, \min(1, \kappa))$. Furthermore, using the composition rule of Eq. (\ref{CompoRule}), it is easy to see that any single-mode rotation (unitary) ${\cal R}$ commutes with a PIC, i.e.,
${\cal R} \circ {\cal A}_\kappa^N =  {\cal A}_\kappa^N \circ {\cal R}$.
\\

\noindent {\bf (ii) Additive noise channel (ANC)}: This is a class of {\em phase-sensitive} Gaussian channels that adds rank-1 noise to the input state, and is described by the triplet $(K, 0, \alpha)$, with 
\begin{eqnarray}
K &=&  \openone_2, \,{\text{and}} \nonumber \\
\alpha&= &  \frac{1}{2}\textrm{diag} (0,\epsilon),  \label{ii}
\end{eqnarray} 
where the noise parameter $\epsilon > 0$. We will denote this channel as $\mathcal I^{\epsilon}$, and will call it the additive noise channel (ANC).

\section{Gaussian regenerative stations in a lossy channel}\label{sec:GaussianStations}
In this section we investigate lossy bosonic channels that have 
intermediate Gaussian channels 
inserted at some intervals. We will show that such an arrangement is still equivalent (up to Gaussian operations at the entrance and the exit) to a lossy bosonic channel with the total loss of the original loss segments. As a consequence, insertion of Gaussian channels 
 cannot increase the secrecy capacity of the lossy bosonic channel. 

The setup for the main result of this paper is schematically depicted in Fig.~\ref{fig:summary}. Consider a pure-loss optical channel ${\cal A}_\eta$ with a given amount of total end-to-end ($A$ to $B$) transmittance $\eta \in (0, 1]$. Let us place a Gaussian center station---a quantum channel, or a trace-preserving completely positive map, ${\cal N}_G^{C_1 \to C_2}$---somewhere in the middle, thereby splitting ${\cal A}_\eta$ into two pure-loss segments: a pure-loss channel with transmittance $\eta_1$, ${\cal A}_{\eta_1}$ ($A$ to $C_1$), and a pure-loss channel with transmittance $\eta_2$, ${\cal A}_{\eta_2}$ ($C_2$ to $B$), such that $\eta_1\eta_2 = \eta$. We show that the overall channel action from $A$ to $B$ is unaffected by the transformation shown  in Fig.~\ref{fig:summary}(a), which replaces the Gaussian center station ${\cal N}_G^{C_1 \to C_2}$ by a Gaussian operation ${{\cal N}_G^1}^{A \to A_1}$ at the input of the channel and a Gaussian operation ${{\cal N}_G^2}^{B_1 \to B}$ at the output of the channel. By applying this transformation recursively, it is easy to see that one can replace any number of Gaussian center stations interspersed through the lossy channel ${\cal A}_\eta$ into two Gaussian operations, at the input and the output, respectively. 

\eat{
Consider an $n$-mode lossy bosonic channel ${\cal A}_\eta ^{\otimes n } \triangleq (K_0, 0,\alpha_0)$ with $K_0  =  \sqrt{\eta } \openone_{2n}$ and $\alpha_0  = \frac{1- \eta}{2} \openone_{2n}$. Let an $n$-mode Gaussian channel ${\cal N}_G  \triangleq (K, 0, \alpha)$ be the candidate center station, which could act collectively on $n$ spatial and/or temporal modes of the field. The loss-sandwiched composition $\Phi_0 :=  {\cal A}_{\eta_2} ^{\otimes n } \circ{\cal N}_G  \circ  {\cal A}_{\eta_1} ^{\otimes n}$ is therefore given by $(K_t, 0, \alpha_t)$, with
\begin{eqnarray}
K_t  &=&  \sqrt{\eta_1 \eta_2}K \nonumber, \,{\text{and}} \\
\alpha_t  &=& \eta_2\left(\frac{1- \eta_1}{2} K^T K + \alpha    \right)+  \frac{1- \eta_2}{2} \openone_{2n} \label{DefAlpha_main}. 
\end{eqnarray}
Next we construct two Gaussian channels ${\cal N}_G^1$ and ${\cal N}_G^2$ (the latter suffices to be a unitary), such that ${\cal A}_{\eta_2} ^{\otimes n } \circ{\cal N}_G  \circ  {\cal A}_{\eta_1} ^{\otimes n } = {\cal N}_G^2 \circ {\cal A}_{\eta_1 \eta_2} ^{\otimes n } \circ{\cal N}_G^1$. The first step is to argue that there must exist a symplectic matrix $M$ that satisfies $M^T \alpha_t M \ge   \frac{1- \eta_1 \eta_2}{2} \openone_{2n}  + \eta_1 \eta _2 M^T \alpha M$ (see Appendix~\ref{app:mainresult} for detailed proof). Then we consider the channel ${\cal N}_G^1 \triangleq (\tilde K, 0,  \tilde \alpha)$, with 
\begin{eqnarray}
\tilde K &=& K M \nonumber, \,{\text{and}} \\
\tilde \alpha &=& \frac{1}{\eta_1 \eta_2 }  \left(M^T \alpha _t M -   \frac{1- \eta_1 \eta_2}{2} \openone_{2n}\right).
\end{eqnarray}
It is simple to then show that $K_t M = \sqrt{ \eta_1 \eta _2}  \tilde K$, and $M^T \alpha_t M   =  \eta_1 \eta _2  \tilde \alpha    +  \frac{1- \eta_1 \eta_2}{2} \openone_{2n}$, which in turn implies that the compositions ${\cal A}_{\eta_1 \eta_2} ^{\otimes n } \circ{\cal N}_G ^1$, and $U_G(M) \circ \Phi_0$ are identical. Therefore, $\Phi _0 =  {\cal N}_G ^2 \circ     {\cal A}_{\eta_1 \eta_2} ^{\otimes n } \circ{\cal N}_G ^1  $, where ${\cal N}_G^2 \triangleq (M^{-1},0,0)$ is a Gaussian unitary. 
}


Let us consider an $n$-mode lossy bosonic channel ${\cal A}_\eta ^{\otimes n } \triangleq (K_0, 0,\alpha_0) $ with  
\begin{eqnarray}
K_0  &=&  \sqrt{\eta } \openone_{2n} \nonumber, \\
\alpha_0  &=& \frac{1- \eta}{2} \openone_{2n}. 
\end{eqnarray}
Let ${\cal N}_G  \triangleq (K, 0, \alpha)$ denote an $n$-mode Gaussian channel, which we consider as the candidate for a center station. Note that this Gaussian center station could act collectively on $n$ spatial and/or temporal modes of the propagating field. The main result of this section is the proof of the following proposition, also depicted schematically in Fig.~\ref{fig:summary}.

\begin{proposition}\label{prop:n-mode}
For any $n$-mode Gaussian channel ${\cal N}_G$ there exists a Gaussian channel  ${\cal N}_G^1$  and a Gaussian unitary channel  ${\cal N}_G^2$   that satisfy
\begin{eqnarray}
{\cal A}_{\eta_2} ^{\otimes n } \circ{\cal N}_G  \circ  {\cal A}_{\eta_1} ^{\otimes n } = {\cal N}_G^2 \circ {\cal A}_{\eta_1 \eta_2} ^{\otimes n } \circ{\cal N}_G^1.   
\end{eqnarray}
\end{proposition}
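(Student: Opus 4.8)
The plan is to verify the claimed factorization directly at the level of the channel data $(K,\alpha)$, exploiting that with $m=0$ a Gaussian channel is determined by the pair $(K,\alpha)$, so two channels coincide iff their data agree. The whole proposition then collapses to a single matrix inequality that guarantees the constructed input channel ${\cal N}_G^1$ is completely positive. Concretely, I would fix an ansatz for ${\cal N}_G^1$ and ${\cal N}_G^2$ parametrized by an as-yet-undetermined symplectic matrix $M$, show that the composition identity holds for \emph{every} symplectic $M$ by matching $(K,\alpha)$, and then choose $M$ so that ${\cal N}_G^1$ is physical.

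First I would compute the data of $\Phi_0:={\cal A}_{\eta_2}^{\otimes n}\circ{\cal N}_G\circ{\cal A}_{\eta_1}^{\otimes n}$. Two applications of the composition rule \eqref{CompoRule} give $K_t=\sqrt{\eta_1\eta_2}\,K$ and $\alpha_t=\eta_2\!\left(\tfrac{1-\eta_1}{2}K^TK+\alpha\right)+\tfrac{1-\eta_2}{2}\openone_{2n}$. I then posit ${\cal N}_G^2:=(M^{-1},0,0)$, a Gaussian unitary since $M^{-1}$ is symplectic, together with ${\cal N}_G^1:=(\tilde K,0,\tilde\alpha)$ where $\tilde K=KM$ and $\tilde\alpha=\frac{1}{\eta_1\eta_2}\big(M^T\alpha_t M-\tfrac{1-\eta_1\eta_2}{2}\openone_{2n}\big)$. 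Expanding ${\cal N}_G^2\circ{\cal A}_{\eta_1\eta_2}^{\otimes n}\circ{\cal N}_G^1$ with \eqref{CompoRule} and using $MM^{-1}=\openone_{2n}$ and $M^T\sigma M=\sigma$ shows its data is exactly $(K_t,\alpha_t)$, independent of the choice of symplectic $M$. Hence the operator identity holds the moment ${\cal N}_G^1$ is a legitimate channel.

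The crux is therefore to select $M$ so that ${\cal N}_G^1$ obeys the physicality condition \eqref{PhysCon}. Substituting $\tilde K,\tilde\alpha$, using $M^T\sigma M=\sigma$, and invoking the physicality \eqref{PhysCon} of the given ${\cal N}_G$ (which supplies $\eta_1\eta_2\,M^T\alpha M\ge\tfrac{i\eta_1\eta_2}{2}M^T(\sigma-K^T\sigma K)M$ after congruence by $M$ and scaling), the requirement reduces to the single inequality $M^TCM\ge\tfrac{1-\eta_1\eta_2}{2}\openone_{2n}$, where $C:=\alpha_t-\eta_1\eta_2\alpha$. A short rearrangement presents this matrix as $C=\eta_2(1-\eta_1)\,\beta+\tfrac{1-\eta_2}{2}\openone_{2n}$ with $\beta:=\tfrac12 K^TK+\alpha$, the output covariance matrix of ${\cal N}_G$ on vacuum, which satisfies $\beta\ge\tfrac{i}{2}\sigma$ by \eqref{PhysConState}.

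Establishing the existence of such an $M$ is the main obstacle. I would handle it as follows. Writing $c:=\tfrac{1-\eta_1\eta_2}{2}$, $a:=\eta_2(1-\eta_1)\ge0$ and $b:=\tfrac{1-\eta_2}{2}\ge0$, one checks the key algebraic identity $c=\tfrac{a}{2}+b$. I then combine two elementary semidefinite facts: $a\beta\ge\tfrac{a}{2}\,i\sigma$ (from $\beta\ge\tfrac{i}{2}\sigma$), and $b\,\openone_{2n}\ge b\,i\sigma$ (since $\openone_{2n}-i\sigma\ge0$, because $i\sigma$ is Hermitian with eigenvalues $\pm1$). Adding these gives $C\ge c\,i\sigma$, which is precisely the statement that every symplectic eigenvalue of $C$ is at least $c$. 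Since $C>0$, Williamson's theorem furnishes a symplectic $S$ with $S^TCS=\mathrm{diag}(\mu_1,\dots,\mu_n,\mu_1,\dots,\mu_n)$ and all $\mu_j\ge c$, so $M:=S$ yields $M^TCM\ge c\,\openone_{2n}$ as needed. The only care required is the single degenerate case $\eta_1=\eta_2=1$, where $C=0$ fails to be strictly positive; there the statement is trivial with $M=\openone_{2n}$. This elementary route (the inequality $\openone_{2n}\ge i\sigma$ plus Williamson) avoids any appeal to general superadditivity of symplectic eigenvalues.
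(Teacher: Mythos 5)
Your proof is correct and follows essentially the same route as the paper: you use the identical ansatz ${\cal N}_G^1 \triangleq (KM,0,\tilde\alpha)$, ${\cal N}_G^2 \triangleq (M^{-1},0,0)$, and your matrix $C=\alpha_t-\eta_1\eta_2\,\alpha$ is exactly $(1-\eta_1\eta_2)\gamma'$ from the paper's Theorem~\ref{thm1}, so your inequality $C \ge \frac{1-\eta_1\eta_2}{2}\, i\sigma$ followed by Williamson's theorem is precisely the paper's convex-combination argument $\gamma' = p\,\gamma + (1-p)\frac{1}{2}\openone_{2n} \ge \frac{i}{2}\sigma$ in rescaled form, capped by the same physicality chain $\tilde\alpha \ge M^T\alpha M \ge \frac{i}{2}(\sigma - \tilde{K}^T\sigma\tilde{K})$. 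The only cosmetic differences are your vacuum-output reading of $\beta = \frac{1}{2}K^TK+\alpha$ (the paper derives $\beta \ge \frac{i}{2}\sigma$ algebraically from $\openone_{2n}-i\sigma\ge 0$) and your explicit handling of the degenerate case $\eta_1\eta_2=1$, which the paper leaves implicit.
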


\begin{proof}
Our goal is to find a pair of  Gaussian channels ${\cal N}_G^1$ and  ${\cal N}_G^2$ that satisfies the physical condition Eq. (\ref{PhysCon}). 
From the composition rule of Eq.~\eqref{CompoRule}  we find the total channel action   $\Phi_t :=  {\cal A}_{\eta_2} ^{\otimes n } \circ{\cal N}_G  \circ  {\cal A}_{\eta_1} ^{\otimes n}$  can be described by $\Phi_t  \triangleq (K_t, 0, \alpha_t)$ with  
\begin{eqnarray}
K_t  &=&  \sqrt{\eta_1 \eta_2}K \nonumber, \\
\alpha_t  &=& \eta_2\left(\frac{1- \eta_1}{2} K^T K + \alpha    \right)+  \frac{1- \eta_2}{2} \openone_{2n} \label{DefAlpha}. 
\end{eqnarray}

We will prove the proposition by constructing the required Gaussian channels using a symplectic matrix  denoted by $M$. The properties of this matrix and its existence
 are the subject of the following theorem:
\begin{theorem}\label{thm1}
For a given $\alpha_t$ in Eq.~\eqref{DefAlpha}, there exists a CM matrix $\gamma^\prime$ and a symplectic matrix $M$  such that 
\begin{equation}
\alpha_t = \eta_1 \eta_2 \alpha + (1-\eta_1 \eta_2) \gamma^\prime 
\end{equation}
and 
\begin{equation}
M^T \gamma ^\prime M \ge  \frac{1}{2}  \openone_{2n}.
\end{equation}
\end{theorem}
\begin{proof}
From the physical condition of a Gaussian channel in Eq.~\eqref{PhysCon}, we have
\begin{eqnarray}
 \frac{i \sigma }{2}& \le&  \frac{i \sigma}{2} + \frac{1}{2} K^T \left( \openone_{2n} - i \sigma \right) K \\
& = & \frac{1}{2} \left( K^T K + i  (\sigma - K^T \sigma K  )\right) \\
& \le&     \frac{1}{2} ( K^T K + 2 \alpha  ) 
 \end{eqnarray}
where we used  in the first line that the matrix $ \openone_{2n} - i \sigma$ is positive semi-definite and in the last line that ${\cal N}_G$ is a physical channel. Our calculation implies $ \gamma := \frac{1}{2} ( K^T K + 2 \alpha  ) $ is a CM of an $n$-mode Gaussian state due to Eq.~\eqref{PhysConState}.
Consider now the convex combination of this CM with the CM of the $n$-mode vacuum state  $\gamma^\prime :=  p \; \gamma  + (1- p)\; \frac{1}{2} \openone_{2n}$with mixing probability $p =   \eta_2(1-\eta_1 ) /(1-\eta_1 \eta_2 ) \in [0,1]$.  A straightforward calculation verifies that $\alpha_t = \eta_1 \eta_2 \alpha + (1-\eta_1 \eta_2) \gamma^\prime $. As $\gamma^\prime$ is a valid CM,  there exists a symplectic matrix $M$ such that  one obtains a diagonal form $M^T \gamma ^\prime M \ge  \frac{1}{2}  \openone_{2n}$, which corresponds to a product of thermal states. 
\end{proof}

We are now in a position to define the Gaussian channels ${\cal N}_G^1\triangleq (\tilde K, 0,  \tilde \alpha)$ and  ${\cal N}_G^2\triangleq (M^{-1}, 0,  0)$ with the help of  
\begin{eqnarray}
\tilde K & = & \frac{1}{\sqrt{\eta_1 \eta_2}} K_t M \;\;\;\left( \equiv K M\right) \; ,\\
\tilde \alpha &=& \frac{1}{\eta_1 \eta_2 }  \left(M^T \alpha _t M -   \frac{1- \eta_1 \eta_2}{2} \openone_{2n}\right).  \label{13siki}
\end{eqnarray}

To show that the channels are proper physical channels, we can concentrate on  ${\cal N}_G^1$ since ${\cal N}_G^2$ corresponds to a unitary Gaussian channel.  To prove that ${\cal N}_G^1$ is physical, we use in a first step the results of theorem \ref{thm1}, and then the physicality constraints on the channel ${\cal N}_G$, followed by a rewriting of the variables. These steps allow us to obtain
\begin{equation}
\tilde \alpha   \ge M^T \alpha M \ge M^T \frac{i}{2}(\sigma - K^T \sigma K) M = \frac{i}{2}(\sigma - {\tilde K}^T \sigma  \tilde K),  
\end{equation} 
Hence,  ${\cal N}_G^1 \triangleq (\tilde K, 0,  \tilde \alpha) $ is a valid Gaussian channel.   It is again straightforward to verify that  $\Phi _t =  {\cal N}_G ^2 \circ     {\cal A}_{\eta_1 \eta_2} ^{\otimes n } \circ{\cal N}_G ^1  $. This  proves  proposition \ref{prop:n-mode}.  
 \end{proof}
 Overall, we showed the equivalence of a 
 bosonic Gaussian channel sandwiched between two lossy bosonic channels to a single lossy bosonic channel, bearing the total loss of the the original bosonic channels, and now sandwiched between two Gaussian channels. 
 This corresponds to the conversion of $(a.1)$ into $(a.2)$ of Fig.~\ref{fig:summary}. A simple iteration of this result shows that any pattern of Gaussian channels interspersed 
  between loss segments can be rearranged into a lossy bosonic channel sandwiched between Gaussian channels [See   Fig.~\ref{fig:summary}(b)].

As the initial Gaussian channel can be combined with the state preparation, and the final Gaussian channel can be combined with the detection setup, it is evident that the total secret key rate of this arrangement is still bound by  $R_{\textrm{UB}}$ of  Eq.~(\ref{TGWbound}).

\section{Entanglement-breaking conditions for single-mode center stations}\label{sec:EBconditions}
 As discussed in the introduction, there might be practical reasons 
  one want to use interspersed intermediate stations, even if the resulting key rate is still limited by the bound of Eq.~(\ref{TGWbound}). In this sections we will demonstrate severe restrictions on the situations where such an advantage may exist. To do so, we will investigate when such a sequence of lossy channels and Gaussian center stations becomes entanglement breaking (EB) so that its secrecy capacity goes to zero \cite{Cur04}. 
In the following part of this article we execute the central first step of such an investigation and focus on 
 single-mode Gaussian channels. 

A pure-loss channel is not EB by itself, but increasing loss could make the channel progressively more fragile and susceptible to being EB when concatenated with other Gaussian operations, such as amplifiers.  In the following subsections, we show the explicit conditions on the parameters of a  Gaussian non-EB center station ${\cal N}_G$, such that the composition $\Phi_0 \equiv {\cal A}_{\eta_2} \circ {\cal N}_G \circ {\cal A}_{\eta_1}$ is EB, and specialize the conditions to the cases when ${\cal N}_G$ is either a PSA or a PIA. 

\subsection{General non-EB center stations}\label{sec:general_nonEB}
\begin{figure*}
\centering
\includegraphics[width=0.85\textwidth]{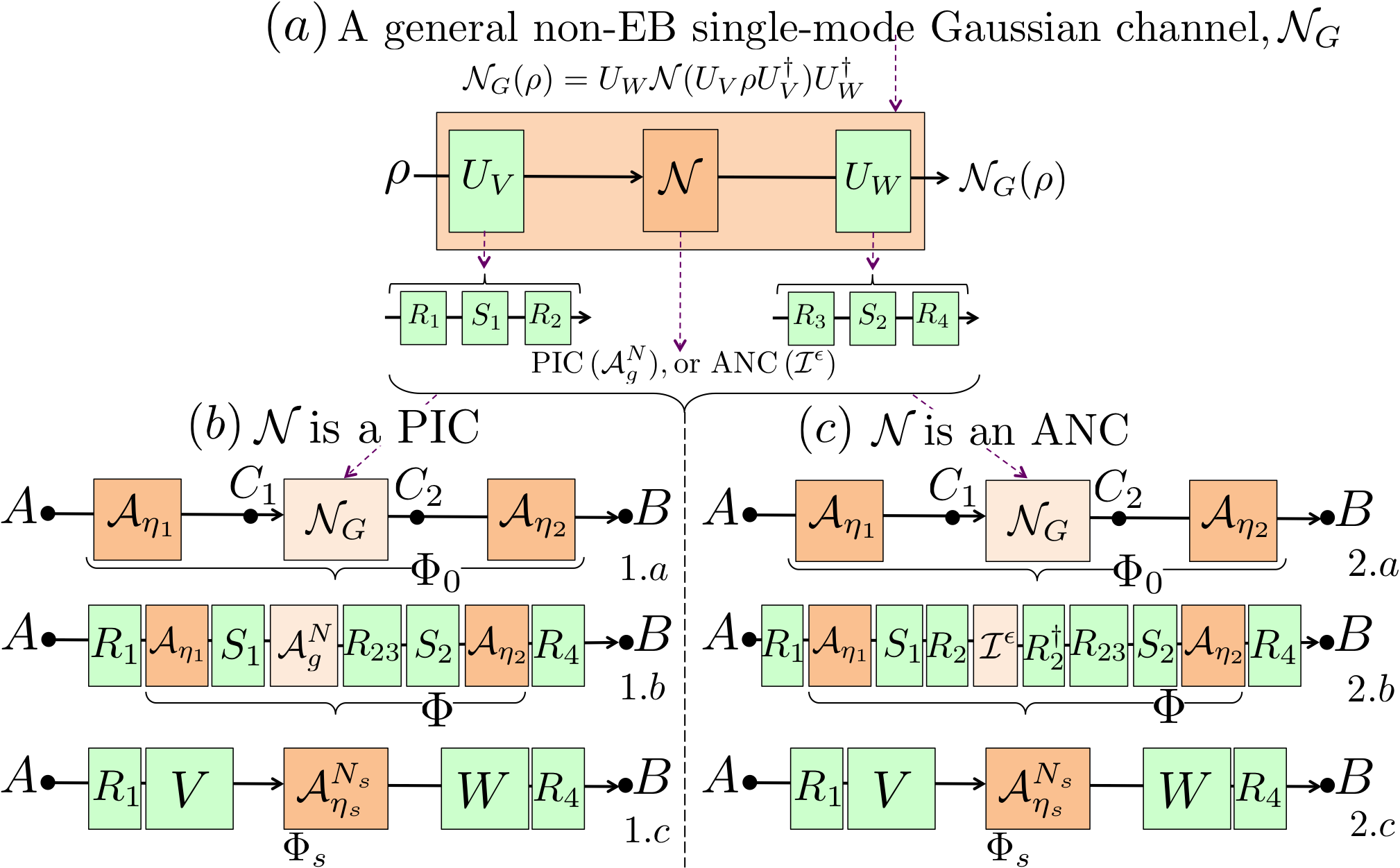}
\caption{(Color online) A general single-mode non-EB Gaussian channel ${\cal N}_G$ is unitary-equivalent to a Gaussian channel ${\cal N}$, which can be  one of two forms, a {\em phase-insensitive channel} (PIC), ${\cal A}_g^N$ , or a phase-sensitive {\em additive noise channel} (ANC), ${\cal I}^\epsilon$.　 For both cases of the center stations sandwiched between two lossy segmenets ${\cal A}_{\eta_1}$ and ${\cal A}_{\eta_2}$,  the total channel action $\Phi_0 \equiv {\cal A}_{\eta_2} \circ {\cal N}_G \circ {\cal A}_{\eta_1}$  is shown to be unitary-equivalent to a PIC channel  ${\cal A}_{\eta_s}^{N_s}$ as in (b)1.c and (c)2.c. 
Green-shaded boxes denote single-mode unitary (reversible) operations, whereas red-shaded boxes denote (in-general irreversible) actions of a single-mode quantum channel---a trace-preserving completely-positive map.}
\label{fig:onemodechannel_main}
\end{figure*}


There is no point in considering EB center stations ${\cal N}_G$ 
as they would trivially render $\Phi_0$ EB.  Any  single-mode Gaussian non-EB station   ${\cal N}_G$  is 
  unitary-equivalent to  either a phase insensitive channel (PIC) or an additive noise channel (ANC)   (See Sec.~\ref{sec:classification}).  In order to evaluate EB conditions, we go deeper into decomposing $\Phi_0$, as depicted in Fig.~\ref{fig:onemodechannel_main}.  The two branches of Fig.~\ref{fig:onemodechannel_main} consider decompositions  when ${\cal N}_G$ is unitary-equivalent to a  PIC  or an ANC, respectively. 

\noindent {\em 1. ${\cal N}_G$ unitary equivalent to a PIC ${\cal A}_g^N$}---Since phase-rotations commute with PICs, it is straightforward to see that the concatenated channel $\Phi_0 \equiv {\cal A}_{\eta_2} \circ {\cal N}_G \circ {\cal A}_{\eta_1}$ is unitary-equivalent to a channel $\Phi \equiv \mathcal A_{\eta_2} \circ \mathcal S_{ G_2, \theta  }\circ \mathcal A_{g}^{N}  \circ \mathcal S_{G_1} \circ \mathcal A_{\eta_1}$ [see Fig.~\ref{fig:onemodechannel_main}(b), lines~1.a~and~1.b], where $\mathcal S_{G_1}$ denotes a phase-quadrature squeezer (PSA) with gain $G_1$, and $\mathcal S_{G_2}$ is another PSA with gain $G_2$ and squeezing angle $\theta$. It is easy to deduce the parameters for the channel $\Phi$ as
\begin{align}
K_{\rm PIC}  =& \sqrt{g\eta _1 \eta_2} S_1 {R_\theta S_2  R_\theta^\dagger} \; ,\\
\alpha_{\rm PIC}  =& \frac{1}{2}\left[\eta_2(1-\eta_1) g K_\theta^T S_1 ^2 K_\theta + \right. \nonumber\\
 &\left. \eta_2(|g-1|+2N) K_\theta^T K_\theta + (1-\eta_2)  \openone_2\right],  \label{eq3333}
\end{align}
 respectively, where $K_\theta = R_\theta S_2  R_\theta^\dagger$, and
\begin{eqnarray}
S_i &=& \left(
  \begin{array}{cc}
    \sqrt G_i +\sqrt{G_i-1}   & 0   \\
       0 &   \sqrt G_i -\sqrt{G_i-1}   \\
  \end{array} \label{Gainnnnn}
\right) , \\
R_\theta &=& \left(
  \begin{array}{cc}
 \cos \theta   &  \sin \theta    \\
  -\sin\theta  & \cos \theta   \\
  \end{array}
\right).
\end{eqnarray}  
The following theorem shows  that the total unitary equivalent channel $\Phi$ is further unitary-equivalent to a PIC $\Phi_s$, as shown in line 1.c of Fig.~\ref{fig:onemodechannel_main}(b).

\begin{theorem}\label{thm:phi_phis}
$\Phi$ is unitary-equivalent to a channel $\Phi_s$ that is a PIC ${\cal A}_{\eta_s}^{N_s}$, whose descriptive parameters $(K_s,\alpha_s)$ are given by 
\begin{eqnarray}
K_s&=& V K W = \sqrt{g \eta_1 \eta_2 } \openone_2, \, {\text{and}}\nonumber \\
\alpha _s &=& W^T \alpha_{\rm PIC} W = \sqrt{\det(\alpha_{\rm PIC})} \openone_2,   \label{st0_main}
\end{eqnarray} 
where $V$ and $W$ are Gaussian unitaries.
\end{theorem}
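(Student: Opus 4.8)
The plan is to produce the two Gaussian unitaries $V$ and $W$ by exploiting a decoupling that is visible in the transformation law Eq.~\eqref{eq:unitary_equivalence}. Writing $V,W$ for the symplectic matrices of $U_V,U_W$, precomposing $\Phi$ with $U_V$ and postcomposing with $U_W$ sends the channel parameters $(K_{\rm PIC},\alpha_{\rm PIC})$ to $(V K_{\rm PIC} W,\,W^T\alpha_{\rm PIC}W)$, exactly as recorded in Eq.~\eqref{st0_main}. The crucial observation is that the output noise matrix $\alpha_s=W^T\alpha_{\rm PIC}W$ depends on $W$ alone, whereas the gain matrix $K_s=VK_{\rm PIC}W$ still carries the free factor $V$. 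I would therefore first fix $W$ to put the noise into isotropic form, and only afterwards solve for $V$ to isotropize the gain.

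For the first step I would invoke single-mode symplectic (Williamson) diagonalization. Since $\alpha_{\rm PIC}$ in Eq.~\eqref{eq3333} is a real symmetric positive-definite $2\times2$ matrix --- for genuine loss $\eta_2\in(0,1)$ the term $(1-\eta_2)\openone_2$ alone forces positivity, the other two contributions $K_\theta^T S_1^2 K_\theta$ and $K_\theta^T K_\theta$ being positive semidefinite --- there exists a symplectic $W$ with $W^T\alpha_{\rm PIC}W=\nu\openone_2$. Concretely one may take $W=O\,\diag(s,1/s)$, where the rotation $O$ orthogonally diagonalizes $\alpha_{\rm PIC}$ to $\diag(\lambda_1,\lambda_2)$ and the squeezing parameter is chosen as $s^2=\sqrt{\lambda_2/\lambda_1}$, balancing the two quadratures. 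Because symplectic congruence preserves the determinant ($\det W=1$), the common eigenvalue must equal $\nu=\sqrt{\det\alpha_{\rm PIC}}$, matching the claimed $\alpha_s$.

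For the second step I would use that $K_{\rm PIC}$ is merely a scalar multiple of a symplectic matrix. Writing $\tilde K:=S_1 R_\theta S_2 R_\theta^\dagger=S_1K_\theta$, each factor (squeezer, rotation, inverse rotation) is symplectic, so $\tilde K$ is symplectic and $K_{\rm PIC}=\sqrt{g\eta_1\eta_2}\,\tilde K$. With $W$ now fixed, I set $V:=W^{-1}\tilde K^{-1}$, which is a product of symplectic matrices and hence a legitimate Gaussian unitary. Then $V K_{\rm PIC}W=\sqrt{g\eta_1\eta_2}\,W^{-1}\tilde K^{-1}\tilde K W=\sqrt{g\eta_1\eta_2}\,\openone_2=K_s$, as required, the determinant identity $\det K_s=\det K_{\rm PIC}=g\eta_1\eta_2$ again fixing the scalar consistently. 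The resulting channel has $K_s=\sqrt{g\eta_1\eta_2}\,\openone_2$ and $\alpha_s=\sqrt{\det\alpha_{\rm PIC}}\,\openone_2$, which by Eq.~\eqref{i} is precisely the PIC $\mathcal A_{\eta_s}^{N_s}$ with transmittance $\eta_s=g\eta_1\eta_2$ and excess noise $N_s=\sqrt{\det\alpha_{\rm PIC}}-|1-\eta_s|/2$.

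The only genuine subtlety --- and the step I would be most careful about --- is ensuring that both target conditions can be met by a \emph{single} consistent pair $(V,W)$ rather than conflicting. The decoupling noted above is exactly what makes this work: $\alpha_s$ fixes $W$ with no reference to $V$, and $K_{\rm PIC}$ being scalar-times-symplectic guarantees that the $V$ needed to isotropize $K_s$ comes out symplectic for \emph{whatever} $W$ was selected. I would also verify the degenerate boundary limits (the no-added-noise case $g=1,\,N=0$, or the no-output-loss case $\eta_2=1$) to confirm that $\alpha_{\rm PIC}$ stays positive-definite so that Williamson diagonalization applies; outside such trivial parameter values this is immediate from the explicit form of $\alpha_{\rm PIC}$.
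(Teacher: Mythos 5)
Your proof is correct and takes essentially the same route as the paper's own proof in Appendix~\ref{app:proof_thm:phi_phis}: your $W = O\,\diag(s,1/s)$ with $s^2=\sqrt{\lambda_2/\lambda_1}$ is exactly the paper's choice $W=(\lambda_1\lambda_2)^{-1/4}\,W_0\sqrt{\diag(\lambda_2,\lambda_1)}$, and your $V=W^{-1}\tilde K^{-1}=W^{-1}K_\theta^{-1}S_1^{-1}$ is exactly the paper's $V$. Your extra remarks (symplecticity of the factors, the decoupling of $\alpha_s$ from $V$, and the degenerate boundary cases) simply make explicit what the paper leaves implicit.
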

\begin{proof}
See Appendix~\ref{app:proof_thm:phi_phis}.
\end{proof}

Comparing Eq.~\eqref{st0_main} with Eq.~\eqref{i}, it is easy to see that $\Phi_s$ is in fact a PIC, ${\cal A}_{\eta_s}^{N_s}$, with
\begin{eqnarray}
\eta_s  &=& g \eta_1 \eta_2, \,{\text{and}} \nonumber \\
N_s  &=& \sqrt{ \det (\alpha_{\rm PIC})} - \frac{|1- \eta_s|}{ 2}. \label{DefEff}
\end{eqnarray}

The condition under which the Gaussian center station ${\cal N}_G $ causes the lossy channel to be an EB channel is determined by applying Eq.~(\ref{38}) to the parameters of $\Phi_s$ in Eq.~\eqref{DefEff} since it is unitary equivalent to a PIC. Therefore, the channel $\Phi_0$ is EB  if
\begin{eqnarray}
\sqrt{\det(\alpha_{\rm PIC} )} \ge \frac{1}{2}(1+ g\eta_1\eta_2). \label{ebpic38}
\end{eqnarray}
\\

\noindent {\em 2. ${\cal N}_G$ unitary equivalent to an ANC ${\cal I}^\epsilon$}--- it is straightforward to deduce [see line 2.b in Fig.~\ref{fig:onemodechannel_main}(c)] that, $\Phi_0 \equiv {\cal A}_{\eta_2} \circ {\cal N}_G \circ {\cal A}_{\eta_1}$ is unitary-equivalent to a channel $\Phi = \mathcal A_{\eta_2} \circ \mathcal S_{ G_2, \theta }\circ (\mathcal R_{-\phi} \circ \mathcal I ^{\epsilon } \circ \mathcal R_\phi)   \circ \mathcal S_{G_1} \circ \mathcal A_{\eta_1}$, whose parameters are given by 

\begin{eqnarray}
K_{\rm ANC} &=& \sqrt{ \eta _1 \eta_2} S_1  K_\theta, \,{\text{and}}\\
\alpha_{\rm ANC}  &=& \frac{1}{2}\left[ \eta_2(1-\eta_1)  K_\theta^T S_1 ^2 K_\theta + \eta_2 K_\theta^T \epsilon^\prime K_\theta \right. \nonumber \\
&&\left. + (1-\eta_2)  \openone_2\right],
\end{eqnarray} 
where ${\cal R}_\phi$ is a phase rotation, $K_\theta = R_\theta S_2  R_\theta^\dagger$, and
\begin{eqnarray}
 \epsilon ':= \frac{1}{2}R_{\phi} \left(
  \begin{array}{cc}
    0   &  0  \\
     0  &  \epsilon   \\
  \end{array}
\right)
 R_{-\phi}^\dagger.
 \end{eqnarray}
Next we prove that $\Phi$ is unitary-equivalent to a PIC $\Phi_s$ [see line 2.c of Fig.~\ref{fig:onemodechannel_main}(b)].
\begin{theorem}\label{thm:phi_anc}
$\Phi$ is unitary-equivalent to a PIC $\Phi_s$ described by,
\begin{eqnarray}
K_s&=&  \sqrt{\eta_1 \eta_2 } \openone_2  , {\text{and}}\nonumber \\
\alpha _s &=& \sqrt{\det(\alpha_{\rm ANC})} \openone_2. \label{st1}
\end{eqnarray}
The excess  noise parameter is  given by 
\begin{eqnarray}
N_s  &=& \sqrt{ \det (\alpha_{\rm ANC})} - \frac{|1- \eta_1\eta_2 |}{ 2}. \label{ancns1}
\end{eqnarray}
\end{theorem}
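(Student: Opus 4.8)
The plan is to mirror the proof of Theorem~\ref{thm:phi_phis} given in Appendix~\ref{app:proof_thm:phi_phis}, of which this statement is essentially the $g = 1$ specialization. Recall that under a Gaussian unitary equivalence the channel parameters transform as $K_s = V K_{\rm ANC} W$ and $\alpha_s = W^T \alpha_{\rm ANC} W$, where $V$ and $W$ are the symplectic matrices of the input and output Gaussian unitaries $U_V$ and $U_W$ (this follows by composing Eq.~\eqref{eq:symplectic} on either side of the channel action on a CM). The goal is therefore to exhibit symplectic $V$ and $W$ realizing the two target forms in Eq.~\eqref{st1}.

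First I would reduce $K_{\rm ANC}$ to a scalar multiple of a symplectic matrix. Since each squeezer $S_i$ and rotation $R_\theta$ is symplectic with unit determinant, $K_\theta = R_\theta S_2 R_\theta^\dagger$ also has $\det K_\theta = 1$, whence $\det K_{\rm ANC} = \eta_1\eta_2 \det S_1 \det K_\theta = \eta_1\eta_2$. In a single mode the symplectic condition Eq.~\eqref{Guni} is equivalent to having unit determinant (i.e.\ $Sp(2,\mathbb{R}) = SL(2,\mathbb{R})$), so $\hat K := K_{\rm ANC}/\sqrt{\eta_1\eta_2}$ is itself a symplectic matrix.

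Next I would handle $\alpha_s$ by a Williamson (symplectic) diagonalization of the positive matrix $\alpha_{\rm ANC}$: there exists a symplectic $W$ with $W^T \alpha_{\rm ANC} W = \nu\,\openone_2$, and since $\det$ is symplectic-invariant ($\det W = 1$) the single symplectic eigenvalue is $\nu = \sqrt{\det \alpha_{\rm ANC}}$, which is exactly $\alpha_s$ in Eq.~\eqref{st1}. With this $W$ fixed I would then define $V := W^{-1}\hat K^{-1}$, a product of symplectics and hence symplectic, and verify directly that $K_s = V K_{\rm ANC} W = W^{-1}\hat K^{-1}\,(\sqrt{\eta_1\eta_2}\,\hat K)\,W = \sqrt{\eta_1\eta_2}\,\openone_2$. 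Comparing the resulting pair $(K_s, \alpha_s)$ with the PIC standard form Eq.~\eqref{i} identifies $\Phi_s$ as the PIC ${\cal A}_{\eta_s}^{N_s}$ with $\eta_s = \eta_1\eta_2$, and matching $\alpha_s = (|1-\eta_s|/2 + N_s)\openone_2$ yields the excess noise $N_s = \sqrt{\det \alpha_{\rm ANC}} - |1-\eta_1\eta_2|/2$ of Eq.~\eqref{ancns1}.

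The delicate point---and the only place where the argument could fail---is that a single output unitary $W$ must do double duty: it alone fixes $\alpha_s$, yet it must also leave enough room for $V$ to collapse $K_{\rm ANC}$ to a scalar. This is what makes the determinant computation essential: because $K_{\rm ANC}$ is a \emph{scaled} symplectic, for \emph{any} admissible $W$ the product $K_{\rm ANC} W$ is again $\sqrt{\eta_1\eta_2}$ times a symplectic matrix, so its normalized inverse $V$ is always a legitimate symplectic and the $K_s$ condition is automatically compatible with whatever $W$ the $\alpha$-diagonalization demands. The remaining hypotheses to check are routine: strict positive-definiteness of $\alpha_{\rm ANC}$ so that Williamson's theorem applies (guaranteed for $\epsilon > 0$ and $\eta_1,\eta_2 \in (0,1]$), and the elementary evaluation of the single-mode symplectic eigenvalue as $\sqrt{\det\alpha_{\rm ANC}}$.
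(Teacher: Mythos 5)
Your proof is correct and takes essentially the same route as the paper: the explicit construction in Appendix~\ref{app:proof_thm:phi_phis} (an orthogonal $W_0$ diagonalizing $\alpha$ followed by the squeezer rescaling $W = (\lambda_1\lambda_2)^{-1/4}W_0\sqrt{\diag(\lambda_2,\lambda_1)}$) is precisely a hands-on single-mode Williamson diagonalization, and the paper's choice $V = W^{-1}K_\theta^{-1}S_1^{-1}$ coincides with your $V = W^{-1}\hat{K}^{-1}$ because $\hat{K} = S_1 K_\theta$. Your added checks---that $\det K_{\rm ANC} = \eta_1\eta_2$ makes $\hat{K}$ symplectic via $Sp(2,\mathbb{R}) = SL(2,\mathbb{R})$, and that $\alpha_{\rm ANC} > 0$ whenever some loss is actually present---simply make explicit what the paper's one-line proof (``simultaneously diagonalize $K$ and $\alpha$'') leaves implicit.
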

\begin{proof}
As in the proof of Theorem~\ref{thm:phi_phis} (see Appendix~\ref{app:proof_thm:phi_phis}), we can simultaneously diagonalize $K$ and $\alpha$. This follows Eq.~\eqref{st1}.  Comparing Eqs.~\eqref{st1} and~\eqref{i}, we can determine the parameter $N_s$ in  Eq.~\eqref{ancns1}. 
\end{proof}
The condition under which the Gaussian center channel ${\cal N}_G $ causes the lossy channel to be an EB channel  
 is determined by applying Eq. (\ref{38}) to the parameters of $\Phi_s$ in Eqs.~\eqref{st1} and~\eqref{ancns1}  since it is unitary equivalent to a PIC. The condition for $\Phi_0$ to be EB, translates to 
\begin{equation}
\sqrt{\det(\alpha_{\rm ANC} )} \ge \frac{1}{2}(1+ \eta_1\eta_2).
\end{equation}

\subsection{Explicit examples of a Gaussian center station: optical amplifiers}\label{sec:PIAPSA}
In this subsection we illustrate our results in Section~\ref{sec:general_nonEB} with the important example of optical amplifiers used as center stations. We will consider the cases of a phase-sensitive amplifier (PSA) and phase-insensitive amplifier (PIA). Detailed proofs of the results will be deferred to Appendix~\ref{sec:PSAPIA}. If the center station ${\cal N}_G$ is a PSA of gain $G_{\rm PSA}$, then the composition $\Phi_0 \equiv {\cal A}_{\eta_2} \circ {\cal N}_G \circ {\cal A}_{\eta_1}$ becomes EB if the gain $G_{\rm PSA} $ exceeds a threshold $G^{\rm thres}_{\rm PSA}$  as
\begin{equation}
G_{\rm PSA} \ge G^{\rm thres}_{\rm PSA} := 1 + \frac{\eta_1}{(1-\eta_1)(1-\eta_2)}.
\end{equation} 
(See Appendix~\ref{sec:PSA_app} for proof.) If ${\cal N}_G$ is a PIA of gain $G_{\rm PIA}$, then $\Phi_0$ becomes EB if the Gain $G_{\rm PIA}$ exceeds a threshold value $G^{\rm thres}_{\rm PIA}$  (see Appendix~\ref{sec:PIA_app} for proof):
\begin{equation}
G_{\rm PIA} \ge G^{\rm thres}_{\rm PIA} := \frac{1}{1-\eta_1}.
\end{equation} 
Note that the transmittance $\eta_2$ of the loss segment {\em after} the PIA does not play a role in determining when $\Phi_0$ becomes EB. The expression for the threshold shows that when the channel transmittance  $\eta_1$ of a the initial is low, an amplifier with even a small amount of gain can render the lossy channel EB. Finally, the concatenation of a chain of PSA center stations, $\eta_1 \to {\rm PSA}(G_1) \to  \eta_2 \to {\rm PSA}(G_2) \to \ldots \to {\rm PSA}(G_k) \to \eta_{k+1}$, can be decomposed
as ${\cal N}_G^2 \circ {\cal A}_{\eta_1 \eta_2 \ldots \eta_{k+1}} \circ {\cal N}_G^1$, where ${\cal N}_G^1$ is a PSA at the channel input (of an appropriate gain and squeezing angle) followed by classical thermal noise addition, ${\cal A}_{\eta_1\eta_2\ldots\eta_{k+1}}$ is the entire channel loss collected in the middle, and ${\cal N}_G^2$ is a PSA at the channel output. For expressions of the gain and phase parameters of the PSAs at the transmitter and the receiver, see Appendix~\ref{sec:PSAchain}.

We note here that PIAs can improve the signal-to-noise ratio (SNR) of a sub-unity-efficiency optical heterodyne detection receiver, albeit up to 3 dB of the quantum limited SNR, when preceding the receiver. PSAs on the other hand have been proposed for use in optical imaging~\cite{Dut10} and secure-key generation~\cite{Zha14}, to boost the effective detection efficiency of homodyne detection receivers, in principle pushing the receiver's performance all the way to the quantum limited SNR, by preceding the receiver with a PSA whose gain quadrature is phase-matched to the homodyne detector's local oscillator. Despite these practical uses of optical amplifiers, 
our results in the earlier sections show that these amplifiers cannot increase the secret key capacity, and the results in the current section show that it is unlikely that they will help to realize the given secret key capacity in a  practical implementation.

\section{Conclusions}\label{sec:conclusions}
It was recently shown \cite{Tak13} that for QKD (secure key generation), along with a few other optical quantum communication tasks such as quantum (qubit) communication, entanglement generation, and direct-secure communication (each with two-way authenticated classical communication assistance), the rates are upper bounded by $R_{\rm UB} = \log_2[(1+\eta)/(1-\eta)]$ bits per mode over a pure-loss optical channel of transmittance $\eta$. This upper bound reads $R_{\rm UB} \approx 2.88\eta$ when $\eta \ll 1$ (high loss), which translates to an exponential decay of rate with distance $L$ in fiber ($\eta \propto e^{-\alpha L}$), and an inverse-square decay with $L$ in free-space ($\eta \propto 1/L^2$). Quantum repeaters are conceptual devices that,  when inserted along the lossy channel, can help circumvent this rate-loss trade-off. 

In this paper, we have proven the inefficacy of  bosonic Gaussian channels---optical processes that can be assembled using passive linear optics (beamsplitters and phase-shifters) and squeezers (phase-sensitive amplifiers, and the interaction  of parametric downconversion)---to be used as quantum repeaters. We prove this by showing that any concatenation of such untended Gaussian operations along a lossy channel can be simulated by one Gaussian operation at the channel input and one at the channel output, where the entire loss in the channel is collected in the middle. We thereby argue that any communication protocol that uses such a chain of Gaussian center stations can be replaced by another protocol of the same performance without those stations, the transmitter and receiver of which are slightly modified versions of those used by the original protocol. As a consequence, the upper bound $R_{\rm UB}$, as shown above, still applies.  
Note, however,  that our formulation is entirely based on the property of Gaussian channels and does not preclude the possibility that a trace-decreasing Gaussian operation \cite{Giedke2002}  could serve  as a quantum repeater. 


  It would be possible that intermediate trace-preserving Gaussian operations  could be of practical advantage, while the same performance of any protocol working with such middle stations is  in principle  achievable {\it without} middle  stations. 
  In order to demonstrate  practical restrictions for use of  conventional Gaussian stations, we separately analyzed the case of a general  single-mode  Gaussian channel  sandwiched  between  lossy channels. We derived the conditions that the center station renders the end to end lossy channel entanglement breaking, and hence useless for QKD.  From special cases for quantum-limited optical amplifiers as center stations,  we found that in a high-loss regime, even modest amplification gains will render the overall channel entanglement breaking. 
  

\acknowledgments
 RN, OG, and NL were supported by the DARPA Quiness program  under prime contract number W31P4Q-12-1-0017. SG was supported by the DARPA Quiness program subaward contract number SP0020412-PROJ0005188, under prime contract number W31P4Q-13-1-0004.

\newpage

\appendix
\section{Extracting one mean displacement in a concatenated Gaussian operation}\label{app:displacement}

The action of the concatenation of $n$ Gaussian channels ${\cal E}_n \circ \ldots \circ {\cal E}_2 \circ {\cal E}_1$, where ${\cal E}_i \triangleq (K_i, m_i, \alpha_i)$ can always be mimicked by a concatenation ${\cal E}^\prime_n \circ \ldots \circ {\cal E}^\prime_2 \circ {\cal E}^\prime_1$, where all the displacement terms are pushed to the $n$-th channel, i.e., ${\cal E}^\prime_i \triangleq (K_i, 0, \alpha_i)$, $1 \le i \le n-1$, and ${\cal E}^\prime_n \triangleq (K_n, m_t, \alpha_n)$, where $m_t$ is a function of $\left\{m_1, \ldots, m_n\right\}$, and $\left\{K_1, \ldots, K_n\right\}$. To see this, consider the composition of $n$ Gaussian channels:
\begin{eqnarray}
\mathcal E_{123 ... n } &=&  \mathcal E_n \circ \cdots \circ \mathcal E_3 \circ \mathcal E_2 \circ \mathcal E_1, 
\end{eqnarray}
for which we may write,
\begin{eqnarray}
d_{123 ... n }  &:=&  K_{123...n}^T d + m_{12...n} \nonumber, \\
      &=& K_n^T K_{123...n-1}^T  d +  K_n ^T m_{123 ...n-1} +m_n   \nonumber \\
     & \vdots& \nonumber \\
     &=& K_{1 \to n}^T d + \sum_{j=2}^n K_{j \to n }^T m_{j -1} +m_n \nonumber \\   &= & K_t d + m_t \label{70}%
\end{eqnarray}
where
\begin{eqnarray}
 K_t :&=&   K_{1  \to n } \; ,\nonumber \\  
  m_t:&=&  \sum_{j=2}^n K_{j \to n }^T m_{j -1} +m_n \; ,\\
 K_{j \to n } :&=&  \begin{cases}
  K_j K_{j+1} K_{j+2} \cdots K_{n-1} K_n   &  j< n-1   \nonumber \\
          K_n  & j=n. 
\end{cases}
\end{eqnarray}
From Eq. (\ref{70}) we can confirm that the total change in the first moment $d$ is given  by $K_t =  K_{1 \to n}  $ and a constant shift $m_t= \sum_{j=2}^n K_{j \to n }^T m_{j -1} +m_n $. Hence, we can obtain the same transformation of $d$, for example, by setting $m_1 = m_2 = \cdots = m_{n-1} =0 $ and $m_n = m_t$, while leaving the gain terms $K_j$ of each channel $\mathcal E_ j $ as they are. To be precise, the following two channels equivalently act on $(\gamma, d) $.
\begin{eqnarray}
E & := & \underbrace{\mathcal E_n}_{(K_n, m_n,\alpha_n)}   \circ    \cdots  \circ  \underbrace{\mathcal E_2}_{(K_2, m_2,\alpha_2)} \circ  \underbrace{\mathcal E_1}_{(K_1, m_1,\alpha_1)},    
 \\ 
E' & : =   & \underbrace{\mathcal E_n}_{(K_n, m_t,\alpha_n)}   \circ    \cdots  \circ  \underbrace{\mathcal E_2}_{(K_2, 0,\alpha_2)} \circ  \underbrace{\mathcal E_1}_{(K_1, 0,\alpha_1)}.  \end{eqnarray}
Therefore, mean  displacement terms can be absorbed into the final Gaussian operation, and their  effect  can be treated  separately in the analysis of the sequential channel action.  In this manner,    
 one can usually discuss Gaussian channel properties by assuming $m_j=0$ for all $j$ without loss of generality, and taking into account the effect of $m$'s, if at all needed, at once.

\section{Proof of Theorem~\ref{thm:phi_phis}}\label{app:proof_thm:phi_phis}

We restate Theorem~\ref{thm:phi_phis} below for completeness:

\begin{theorem*}
$\Phi$ is unitary-equivalent to a channel $\Phi_s$ that is a PIC ${\cal A}_{\eta_s}^{N_s}$, whose descriptive parameters $(K_s,\alpha_s)$ are given by
\begin{eqnarray}
K_s&=& V K W = \sqrt{g \eta_1 \eta_2 } \openone_2, \, {\text{and}}\nonumber \\
\alpha _s &=& W^T \alpha W = \sqrt{\det(\alpha)} \openone_2,   \label{st0}
\end{eqnarray} 
where $V$ and $W$ are Gaussian unitaries.
\end{theorem*}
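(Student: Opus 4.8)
The plan is to exploit the fact that a Gaussian unitary equivalence acts on the descriptive triplet of a channel in a very structured way. Pre-composing $\Phi$ with a Gaussian unitary of symplectic matrix $V$ and post-composing with one of symplectic matrix $W$ sends $(K_{\rm PIC},0,\alpha_{\rm PIC})$ to $(V K_{\rm PIC} W,\,0,\,W^T \alpha_{\rm PIC} W)$, as follows directly from the composition rule Eq.~\eqref{CompoRule} applied to the unitary channels $(V,0,0)$ and $(W,0,0)$. The crucial observation is that the transformed noise matrix $W^T \alpha_{\rm PIC} W$ depends on $W$ alone, while the transformed gain matrix $V K_{\rm PIC} W$ depends on both. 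I would therefore first pin down $W$ from the requirement on $\alpha_s$, and only afterwards choose $V$ to fix $K_s$.

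First I would diagonalize the noise. Since $\alpha_{\rm PIC}$ in Eq.~\eqref{eq3333} is a sum of positive-semidefinite terms containing the strictly positive contribution $(1-\eta_2)\openone_2/2$, it is a genuine positive-definite $2\times2$ covariance matrix. By Williamson's theorem in the single-mode case (the symplectic diagonalization already invoked in the proof of Theorem~\ref{thm1}), there is a symplectic $W$ with $W^T \alpha_{\rm PIC} W = \nu\,\openone_2$; taking determinants of both sides and using $\det W = 1$ gives $\nu = \sqrt{\det(\alpha_{\rm PIC})}$. This already yields $\alpha_s = \sqrt{\det(\alpha_{\rm PIC})}\,\openone_2$, the second line of Eq.~\eqref{st0}.

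The remaining step is to verify that, for this fixed $W$, a symplectic $V$ can still diagonalize the gain. Here I would use the structural fact that $K_{\rm PIC} = \sqrt{g\eta_1\eta_2}\,T$ with $T = S_1 K_\theta = S_1 R_\theta S_2 R_\theta^\dagger$ a product of a squeezer and a rotated squeezer, hence symplectic with $\det T = 1$. Setting $V = (TW)^{-1}$, which is symplectic as the inverse of a product of symplectics, gives $V K_{\rm PIC} W = \sqrt{g\eta_1\eta_2}\,(TW)^{-1}(TW) = \sqrt{g\eta_1\eta_2}\,\openone_2 = K_s$, independently of the $W$ committed to above. Comparing the resulting triplet $(\sqrt{g\eta_1\eta_2}\,\openone_2,\,0,\,\sqrt{\det(\alpha_{\rm PIC})}\,\openone_2)$ with the PIC standard form Eq.~\eqref{i} identifies $\Phi_s$ as ${\cal A}_{\eta_s}^{N_s}$ with $\eta_s = g\eta_1\eta_2$ and $N_s = \sqrt{\det(\alpha_{\rm PIC})} - |1-\eta_s|/2$.

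I expect the main obstacle to be conceptual rather than computational: one must see that the two diagonalization requirements do not conflict. For a generic gain matrix $K$ the condition $\alpha_s \propto \openone_2$ would force a particular $W$, and there would then be no reason for $V K W$ to be a scalar multiple of the identity for any symplectic $V$. What rescues the argument is precisely that $K_{\rm PIC}$ is a positive scalar times a \emph{symplectic} matrix---a consequence of the center station being a PIC whose only nontrivial gain structure comes from the surrounding squeezers---so its symplectic part can always be undone by $V$ after $W$ has been committed to the noise. A minor point to check along the way is the positivity of $\alpha_{\rm PIC}$, needed both for Williamson's theorem to apply and for $\sqrt{\det(\alpha_{\rm PIC})}$ to be the genuine single-mode symplectic eigenvalue.
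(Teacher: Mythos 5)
Your proof is correct and takes essentially the same route as the paper's: the paper picks $W = (\lambda_1\lambda_2)^{-1/4}\,W_0\sqrt{\diag(\lambda_2,\lambda_1)}$, which is just an explicit single-mode Williamson symmetrization of $\alpha_{\rm PIC}$ (your step~2), and then sets $V = W^{-1}K_\theta^{-1}S_1^{-1}$, which is exactly your $V=(TW)^{-1}$ with $T = S_1 K_\theta$. Your additions---invoking Williamson's theorem abstractly rather than constructing $W$, checking positivity of $\alpha_{\rm PIC}$, and observing that the argument works because $K_{\rm PIC}$ is a positive scalar times a symplectic matrix---are sound refinements of the same argument, not a different one.
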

\begin{proof}
Let $W_0$ be an orthogonal matrix that diagonalizes $\alpha$ so that $W_0^T \alpha W_0= \textrm{diag} (\lambda_1, \lambda_2) $. We then have, $\det(\alpha ) = \lambda_1 \lambda_2 $. It is then easy to see that the expression for $\alpha_s$ in Eq. (\ref{st0}) can be obtained by choosing 
\begin{eqnarray}
W = (\lambda_1 \lambda_2)^{-1/4}{W_0 \sqrt{\textrm{diag} (\lambda_2, \lambda_1) }} .
\end{eqnarray}
Given this $W$, we can choose $V= W^{-1} K_\theta^{-1} S_1^{-1}$ to obtain the expression for $K_s$ in Eq. (\ref{st0}). Decomposing $\Phi$ into $\Phi_s$, sandwiched between unitaries $V$ and $W$ is depicted in line 1.c of Fig.~\ref{fig:onemodechannel_main}(b).
\end{proof}

 

\section{Analysis of optical amplifiers as regenerative stations}\label{sec:PSAPIA}

In this Appendix, we prove the entanglement breaking conditions stated in Section~\ref{sec:PIAPSA}, for when an optical amplifier, either phase-insensitive (PIA) or phase-sensitive (PSA), is used as a center station, sandwiched between two pure-loss channel segments. We will evaluate these conditions by applying our general results from Section~\ref{sec:general_nonEB}.

The decomposition shown in line 1.b of Fig.~\ref{fig:onemodechannel_main}(b), with the excess noise parameter $N$ of ${\cal A}_g^N$ set to zero, includes the quantum-noise-limited PSA and PIA as special cases. For $g \ge 1$ and $N=0$, we obtain a simple expression of the determinant of $\alpha$ in Eq.~\eqref{eq3333}:
\begin{eqnarray}
\det(\alpha) &=&\frac{1}{4}\bigg \{ (1-g \eta_1\eta_2)^2
 -4 \eta_2   \bigg[G_2 (1 - g \eta_1) (1 - \eta_2)    \nonumber \\  && 
  +  g G_1 (1 - \eta_1) (1  - g \eta_2) \nonumber \\
  && - 2 g\sqrt{ {G_1} {G_2}}(1 - \eta_1) (1 - \eta_2) \nonumber \\
  &&   \times \left( \sqrt{ {G_1} {G_2}}
  +    \sqrt{( G_1-1)(G_2-1) } \cos 2\theta \right)\bigg] \bigg\}.  \nonumber \\ \label{deta} \end{eqnarray}

\subsection{PSA sandwiched by two lossy channels}\label{sec:PSA_app}
For the case of the quantum-limited PSA, by setting ${G_2=g= 1}$ in Eq. (\ref{deta}) one obtains 
\begin{align}
&  \det( \alpha) =  \frac{   (1- \eta_1 \eta_2)^2}{4} +   \eta_2 (G_1-1) (1 - \eta_1) (1 -\eta_2) . 
\end{align}
From this relation and with $\eta_s = \eta_1 \eta_2$ the EB condition of Eq.~\eqref{ebpic38}  reads 
\begin{eqnarray}
 G_1 \ge 1+ \frac{ \eta_1}{(1- \eta_1)(1- \eta _2)}.  \label{interest}
\end{eqnarray}

\begin{remark}
A pure loss channel ${\cal A}_\eta$, $\eta \le 1$, is not EB. A quantum-limited PSA (which is a squeezer, and hence a unitary) is not EB. This is consistent with the observation that setting either $\eta_1$ or $\eta_2$ close to $1$ requires the PSA gain $G_1$ to go to infinity in order for the composition (loss-PSA-loss) to be EB. It is interesting that even though pure-loss channels and quantum-limited PSA are not EB by themselves, composing them can yield an EB channel if the gain and transmittances satisfy the condition in Eq.~\eqref{interest}.
\end{remark}

\subsection{PIA sandwiched by two lossy channels}\label{sec:PIA_app}

For the case of the quantum-limited PIA by setting ${G_1=G_2= 1}$ in Eq. (\ref{deta}) one obtains 
\begin{eqnarray}
\det (\alpha)= \frac{(1 + 2 (g-1)\eta_2 - g \eta_1  \eta_2 )^2}{4}.
\end{eqnarray}
From this relation 
and with $\eta_s = g \eta_1 \eta_2$,   the EB condition of  Eq.~\eqref{ebpic38}    now reads
\begin{eqnarray}\label{eq:EB_PIA}
 g \ge         \frac{1}{1-\eta_1}. \end{eqnarray}
\begin{remark}
One notable point is that the transmittance $\eta_2$ of the lossy channel that appears {\em after} the PIA, does not play a role in the EB condition in Eq.~\eqref{eq:EB_PIA}.
\end{remark}

\subsection{Analysis of a chain of PSA center stations}\label{sec:PSAchain}
Let us consider  a  chain of PSA center stations, interspersed between the number of $k+1$ lossy segments with transmission  $\{\eta_i\}_{i= 1,2, \cdots,  k+1}$ as in Fig.~\ref{fig:psaseq}(a). The channel action is formally written as 
\begin{align}
\Phi_0  =& \mathcal A_{\eta_{k+1}} \circ \mathcal S_{ G_k } \circ \mathcal A_{\eta_k}  \circ  \cdots \nonumber \\
 &\cdots \circ \mathcal A_{\eta_3} \circ \mathcal S_{ G_2 } \circ  \mathcal A_{\eta_2} \circ \mathcal S_{G_1} \circ \mathcal A_{\eta_1},  \label{eqpsas}
\end{align}  
where  the action of PSAs $ \mathcal S_{ G_i }  $  with the amplification gain of $\{G_i\}_{i= 1,2, \cdots, k}$ can be described   by  Eq.~\eqref{Gainnnnn}. 

\begin{figure}[tphb]
\centering
\includegraphics[width=\columnwidth]{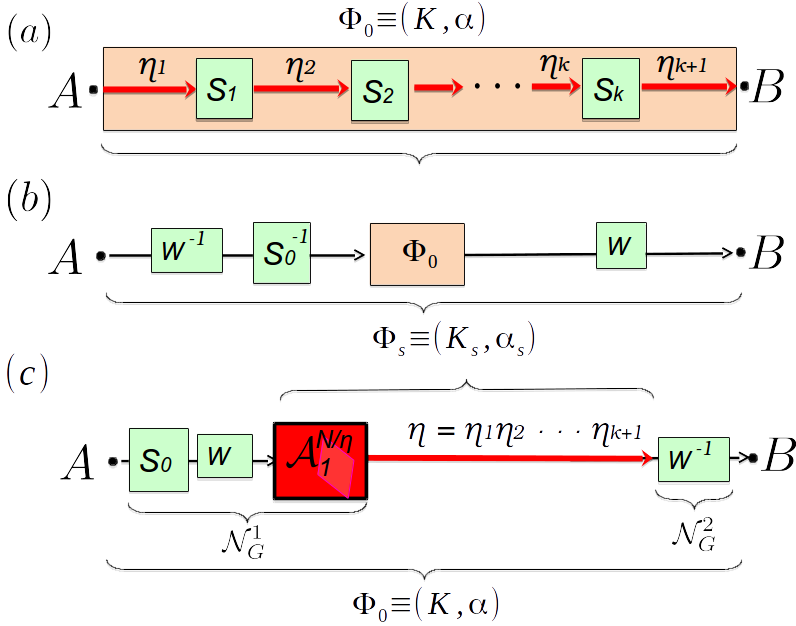}
\caption{(Color online)   (a) A sequence of PSAs $S_i$ with $i= 1,2, \cdots, k$  connected by the lossy segments of  transmission $\eta_i$ with  $i= 1, 2, \cdots , k+1$. The total loss is given by $\eta = \prod_{i=1}^{k+1} \eta_i$.  (b) The PSA-loss chain $\Phi_0$ can be transformed to the standard form of a PIC $\Phi_s $ by  using squeezing unitary operations $W$ and $S_0$.  (c) The standard form $\Phi_s$ is decomposed into a thermal noise channel ${\cal A}_1^{N/\eta}$ and a transmission-$\eta $ pure lossy segment.  Then,   the origin channel $\Phi_0$ can be simulated by  adding unitary operators to cancel out  the unitary operators in (b) at the input-end and output end. This turns $\Phi_0$ into the form with original loss sandwiched by the operation ${\cal N}_G^1$  and the optput-end operation ${\cal N}_G^1$,  ---  an explicit example of our main result explained  in Fig.~\ref{fig:summary}.}
\label{fig:psaseq}
\end{figure}


By repeatedly using the composition rule of Eq.~(\ref{CompoRule}) we can write the channel parameters  $\Phi_0  \triangleq (K, 0, \alpha)$ as follows:
\begin{widetext}
\begin{align}
K&= \sqrt{ \eta_{k+1}\eta_{k} \eta_{k-1}\cdots \eta_{1}} S_1 S_2 \cdots S_k = \sqrt{\eta} S_0, & \\
\alpha& = \frac{1}{2}\left[ \bar \eta_{k+1} \openone_2 + \eta_{k+1} \left\{ \bar \eta_k S_k^T S_k + \eta _k \bar \eta_{k-1} S_{k}^T S_{k-1}^T S_{k-1}S_k + \eta _k  \eta _{k-1} \bar \eta_{k-2} S_{k}^T S_{k-1}^T S_{k-2}^T S_{k-2} S_{k-1}S_k +\cdots  \right\}  \right] \nonumber \\
&=\frac{1}{2}\left[ \bar \eta_{k+1} \openone_2 + \eta_{k+1} \left\{\sum_{n=0}^{k-1}  \frac{\bar \eta_{k-n}}{\eta_{k-n}} \prod_{j=0}^{n} \eta_{k-j} (S_{k-j})^2  \right\}  \right] = \left(
  \begin{array}{cc}
    \alpha^{(+)}   &  0  \\
     0  & \alpha^{(-)}  \\
  \end{array}
\right),
\end{align} where
\begin{align}
{\eta} &:=  { \eta_{k+1}\eta_{k} \eta_{k-1}\cdots \eta_{1}}  , & \\
\bar \eta_i &:=  1- \eta_i ,\\
  S_0&:=  S_1 S_2 \cdots S_k ,   & \\
\alpha^{(\pm)}& :=  \frac{1}{2}\left[ \bar \eta_{k+1}  + \eta_{k+1} \left\{\sum_{n=0}^{k-1}  \frac{\bar \eta_{k-n}}{\eta_{k-n}} \prod_{j=0}^{n} \eta_{k-j}\left (\sqrt{G_{k-j} } \pm  \sqrt{G_{k-j}-1}\right )^2  \right\}  \right].
\end{align}
\end{widetext}

Let us define a squeezer 
\begin{align}
W := \left(\alpha^{(+)} \alpha^{(-)} \right )^{-1/2} \sqrt{\textrm{diag}[ \alpha^{(-)}, \alpha^{(+)}]  } \end{align}
that symmetrizes $\alpha $ as  $W^T \alpha W  \propto  \openone_2$ and set  $V= W^{-1}S_0^{-1}$   similarly to the proof of Theorem~\ref{thm:phi_phis} in Appendix~\ref{app:proof_thm:phi_phis}  [See Fig.~\ref{fig:psaseq}(b)]. Then, we can convert $\Phi_0$ to the standard form of a PIC $\Phi _S  \triangleq (K_s,0, \alpha_s ) $ with 
\begin{align}
K_s &= \sqrt{\eta} \openone_2,\\
\alpha_s&= \sqrt{ \det(\alpha ) } \openone_2 = \left[ \frac{1}{2} (1-\eta ) +N \right] \openone_2.  
\end{align}  
This  implies the EB condition due to Eq.~\eqref{38}: 
 \begin{align}
 \sqrt{ \det(\alpha ) }  \ge \frac{1}{2}   (1+\eta ).
\end{align} 

Let us now explicitly show the decomposition of the PSA chain into a pair of Gaussian operations at the input and the output. See Fig.~\ref{fig:psaseq} for a pictorial depiction. From the standard form we can split out the pure lossy segment  by using the relation $\Phi_s\equiv {\cal A}_\eta^{N}  = {\cal A}_\eta \circ {\cal A }_1^{N/\eta }$, which can be confirmed easily  from  the composition rule of Eq.~(\ref{CompoRule}). We can retrieve the original channel $\Phi_0$ by canceling  the unitary operators $W$, $S_0$, and $W^{-1}$ as in Fig.~\ref{fig:psaseq}(c). To be specific, we can write  the original  channel in the sandwiched form $\Phi _0 = {\cal N }_G^{2} \circ {\cal A}_ \eta \circ {\cal N }_G^{1} $ with the two Gaussian channels ${\cal N }_G^{1}   =  {\cal A}_{1}^{N/ \eta} \circ {\cal W} \circ {\cal S}_0^{-1}$ and ${\cal N }_G^{2}   =   {\cal W } ^{-1} $. Note that the channel at the receiver ${\cal N }_G^{2}$ is  a squeezing  unitary, which is  consistent with  Proposition~\ref{prop:n-mode}.

\eat{
\section{Conclusions}\label{sec:conclusions}
We have shown that when a multi-mode Gaussian center station is sandwiched between two pure-loss channel segments, all the transmission loss can be collected in the middle, with appropriate modifications to the transmitter and the receiver stations (section~\ref{sec:GaussianStations}).  
Using a simple recursive argument, one can extend this statement to the case (shown in FIG.~\ref{fig:summary}(b)) when several Gaussian center stations are interspersed through a lossy channel, and show that it is possible to replace all the center stations by a single Gaussian operation at the input and a single Gaussian operation at the output, with the total transmittance of the entire length of the original lossy channel in between. We have also developed unitary-equivalent decompositions for a non-EB single-mode Gaussian station sandwiched by lossy segments, and determined the condition under which the non-entanglement-breaking relay station renders the whole channel entanglement breaking (see Appendix~\ref{app:singlemodeEB}).   
The following are a few key implications of our  results:

\begin{itemize}
\item Regardless of how a lossy channel is modified---by placing untended Gaussian center stations along the length of the channel---the secret-key rate achievable by {\em any} QKD protocol over that channel is upper bounded by $\log_2[(1+\eta)/(1-\eta)]$ secret key bits per mode~\cite{Tak13}, where $\eta$ is the total transmittance of the lossy channel. This fundamental rate-loss tradeoff establishes the real need for (non-Gaussian) quantum repeaters to make high-rate QKD feasible over long distances.

\item The secret-key capacity of a quantum channel $\cal N$, $C_s({\cal N}) = 0$, when ${\cal N}$ is EB. This is due to the fact that the output of an EB channel can be simulated quantitatively correctly using a measure-and-prepare scheme~\cite{Cur04}. We delineated the conditions under which Gaussian center stations, specifically quantum-noise-limited optical amplifiers, interspersed along a lossy channel, renders the entire channel EB (despite the fact that the pure-loss segments and quantum-noise-limited PSAs are not EB by themselves). For highly lossy channels, even a tiny amount of gain of an optical-amplifier center station renders the overall channel EB [as shown in Eqs.~\eqref{interest} and~\eqref{eq:EB_PIA}]. This shows that, for optical QKD over long distances, it is futile to use Gaussian optical amplifiers. In fact, one is better off not using them at all.

\item For any protocol using a lossy channel interspersed with Gaussian stations, there is {\em another} protocol with the same performance that does not use any center station. This new protocol can be derived from the original protocol by suitably amending the transmitted signals and the receiver measurement. Our result does not preclude a Gaussian center station to improve the performance of a protocol, {\em if} the transmitter and receiver is kept the same. Nor does it preclude the existence of scenarios where it might be technologically easier to implement a protocol with such center stations, as opposed to modifying the transmitter and the receiver per our prescription.
\end{itemize}
}


\begin{thebibliography}{}  

\bibitem{Gio04} V. Giovannetti, S. Guha, S. Lloyd, L. Maccone, J. H. Shapiro, and H. P. Yuen, \prl~{\bf 92}, 027902 (2004).


\bibitem{Wol98} M.~M.~Wolf, D.~P{\' e}rez-Garc{\' i}a, and G.~Giedke, \prl~{\bf 98}, 130501 (2007).

\bibitem{Wil12} M.~M.~Wilde, P.~Hayden, and S.~Guha, \prl~{\bf 108}, 140501 (2012).

\bibitem{Scarani_Renner_2008}
V. Scarani and R. Renner, \prl~{\bf 100}, 200501 (2008).

\bibitem{SBCDLP09}
V. Scarani, H. Bechmann-Pasquinucci, N.~J. Cerf, M. Du\v{s}ek, N. L\"{u}tkenhaus, and M. Peev., \rmp~{\bf 81}, 1301, (2009).

\bibitem{DARPA02} C. Elliott, 
 New J. Phys. 
 {\bf 4}, 46 (2002).

\bibitem{SECOQC09}
M. Peev, C. Pacher, R. All{\' e}aume, C. Barreiro, J. Bouda, W. Boxleitner, T. Debuisschert, E. Diamanti, M. Dianati, J. F. Dynes, S. Fasel, S. Fossier, M. F{\" u}rst, J.-D. Gautier, O. Gay, N. Gisin, P. Grangier, A. Happe, Y. Hasani, M. Hentschel, H. H{\" u}bel, G. Humer, T. L{\" a}nger, M. Legr{\' e}, R. Lieger, J. Lodewyck, T. Lor{\" u}nser, N. L{\" u}tkenhaus, A. Marhold, T. Matyus, O. Maurhart, L. Monat, S. Nauerth, J.-B. Page, A. Poppe, E. Querasser, G. Ribordy, S. Robyr, L. Salvail, A. W. Sharpe, A. J. Shields, D. Stucki, M. Suda, C. Tamas, T. Themel, R. T. Thew, Y. Thoma, A. Treiber, P. Trinkler, R. Tualle-Brouri, F. Vannel, N. Walenta, H. Weier, H. Weinfurter, I. Wimberger, Z. L. Yuan, H. Zbinden, and A. Zeilinger, 
New J. Phys. {\bf 11}, 075001 (2009).

\bibitem{TOKYO_QKD}
M. Sasaki, M. Fujiwara, H. Ishizuka, W. Klaus, K. Wakui, M. Takeoka, S. Miki, T. Yamashita, Z. Wang, A. Tanaka, K. Yoshino, Y. Nambu, S. Takahashi, A. Tajima, A. Tomita, T. Domeki, T. Hasegawa, Y. Sakai, H. Kobayashi, T. Asai, K. Shimizu, T. Tokura, T. Tsurumaru, M. Matsui, T. Honjo, K. Tamaki, H. Takesue, Y. Tokura, J. F. Dynes, A. R. Dixon, A. W. Sharpe, Z. L. Yuan, A. J. Shields, S. Uchikoga, M. Legr�, S. Robyr, P. Trinkler, L. Monat, J. B. Page, G. Ribordy, A. Poppe, A. Allacher, O. Maurhart, T. L�nger, M. Peev, and A. Zeilinger, 
Opt. Express {\bf 19} (11), 10387--10409 (2011).


\bibitem{footnote1}{The idealized theoretical quantum model of a single-mode pure-loss optical channel---either free-space or fiber---is described by the input-output relationship of a beamsplitter of transmittance $\eta$, ${\hat a}_{\rm out} = \sqrt{\eta}\,{\hat a}_{\rm in} + \sqrt{1-\eta}\,{\hat a}_{\rm env}$. Here, ${\hat a}_{\rm in}$, and ${\hat a}_{\rm out}$ are the field annihilation operators for the input and output modes. The environment mode ${\hat a}_{\rm env}$ is required in the input-output relationship to preserve the canonical commutation relations, which is in its vacuum state when the channel is {\em pure loss} and attains the minimum unavoidable noise mandated by physics. For a lossy-noisy channel, the environment mode is in a thermal state.}

\bibitem{BB84}
C.~H. Bennett and G. Brassard, Proceedings of the IEEE International Conference on Computers, Systems, and Signal Processing, page 175, (1984).

\bibitem{E91}
A.~K. Ekert, \prl~{\bf 67}, 661, (1991).

\bibitem{Sil02} C. Silberhorn, T. C. Ralph, N. L\"utkenhaus, and G. Leuchs, \prl~{\textbf{89,}} 167901 (2002).


\bibitem{Gro03} F. Grosshans, G.V. Assche, J. Wenger, R. Brouri, N.J. Cerf, and P. Grangier, Nature 421, 238 (2003); 


\bibitem{footnote}{For an optical fiber channel, transmittance decreases exponentially with distance $L$, i.e., $\eta \propto e^{-\gamma L}$, where $\gamma$ is the fiber loss coefficient that depends upon the transmission center-wavelength $\lambda$. For a single-mode free-space channel in the far-field regime, $\eta \approx \left[A_tA_r/(\lambda L)^2\right]e^{-\alpha L}$, where $A_t$ and $A_r$ are areas of the transmitter and receiver apertures, and $\alpha$ is the atmospheric extinction coefficient.}


\bibitem{Tak13} M. Takeoka, S. Guha, and M. M. Wilde, IEEE Trans. Inf. Theory,  {\bf 60}, 4987 (2014). 

\bibitem{footnote3}{The {\em direct-secure communication} capacity for a {\em given} quantum channel ${\cal N}^{A \to B}$ is defined as the maximum number of bits that the sender  $A$ can send to the receiver $B$ over the quantum channel ${\cal N}$ per channel use, with unlimited use of a two-way authenticated classical public channel, such that the message is kept private from the {\em worst-case} adversary Eve, characterized by the isometric extension of the quantum channel. An important thing to note here is that this task---unless accompanied by a rigorously-proven {\em channel-estimation} step---does not adhere to the same security benchmark as is routine in QKD, since the channel ${\cal N}$ must be assumed known in advance in order to design the forward error correcting code used for the protocol. For example, the error correcting code for a direct-secure-communication protocol over a lossy optical channel may need to assume the loss and noise in the channel in order to realize private communication.}


\bibitem{Cav82} C. M. Caves, \prd  { \bf 26}, 1817 (1982).

\bibitem{Lou85} R. Loudon, IEEE J. Quantum Electron. {\bf QE-21}, 7, (1985).

\bibitem{SST_2011}
A. Scherer, B.~C. Sanders, and W. Tittel, Opt. Express, {\bf 19}, 3004 (2011).

\bibitem{LST_2009}
A.~I. Lvovsky, B.~C. Sanders, and W. Tittel, Nature Photonics, {\bf 3}, 706 
 (2009).

\bibitem{Mur14} S. Muralidharan, J. Kim, N. L{\" u}tkenhaus, M. D. Lukin, and L. Jiang, Phys. Rev. Lett. {\bf 112}, 250501 (2014).

\bibitem{Azu13} K. Azuma, K. Tamaki, and H.-K. Lo, arXiv:1309.7207 [quant-ph], (2013).

\bibitem{Bra05} S. L. Braunstein, \pra~{\bf 71}, 055801 (2005).

 \bibitem{Horo03}M. Horodecki, P. W. Shor, and M. B. Ruskai, Rev. Math. Phys. {\bf 15}, 629 (2003).

\bibitem{Hol07} A. S. Holevo, Probl. Inf. Transm. 
 {\bf 43}, 1 
 (2007).

\bibitem{Hol08} A. S. Holevo, Probl. Inf. Transm. 
{\bf 44},  171 
 (2008).

\bibitem{Cur04} M. Curty, M. Lewenstein, and N. L{\" u}tkenhaus, \prl~{\bf 92}, 217903, (2004).


\bibitem{renner05b} R. Renner, N. Gisin, and B. Kraus, Phys. Rev. A, {\bf 72}, 012332, (2005).


\bibitem{Bartlett2002}
S.~D.~Bartlett, B.~C.~Sanders, S.~L.~Braunstein, and K.~Nemoto, Phys.\ Rev.\ \ {\bf 88}, 097904 (2002). 

\bibitem{Eisert2002}
J.~Eisert, S.~Scheel, and M.~B.~Plenio, Phys.\ Rev.\ Lett.\ {\bf 89}, 137903 (2002).

\bibitem{Fiurasek2002}
J.~Fiurasek, Phys.\ Rev.\ Lett.\ {\bf 89}, 137904 (2002).

\bibitem{Giedke2002}
G.~Giedke and J.~I.~Cirac, Phys.\ Rev.\ A {\bf 66}, 032316 (2002).


\bibitem{Cerf2005}
N.~J.~Cerf, O.~Kr\"{u}ger, P.~Navez, R.~F.~Werner, and M.~M.~Wolf, Phys.\ Rev.\ Lett.\ {95}, 070501 (2005). 

\bibitem{Takeoka2008}
M.~Takeoka and M.~Sasaki, Phys.\ Rev.\ A {\bf 78}, 022320 (2008).

\bibitem{Tsujino2011}
K.~Tsujino, D.~Fukuda, G.~Fujii, S.~Inoue, M.~Fujiwara, M.~Takeoka, and M.~Sasaki, 
Phys.\ Rev.\ Lett.\ {\bf 106}, 250503 (2011). 

\bibitem{Wittmann2010-1}
C.~Wittmann, U.~L.~Andersen, M.~Takeoka, D.~Sych, and G.~Leuchs, Phys.\ Rev.\ Lett.\ {\bf 104}, 100505 (2010). 

\bibitem{Wittmann2010-2}
C.~Wittmann, U.~L.~Andersen, M.~Takeoka, D.~Sych, and G.~Leuchs, Phys.\ Rev.\ A {\bf 81}, 062338 (2010). 

\bibitem{Niset2009}
J.~Niset, J.~Fiur\'{a}\v{s}ek, and N.~J.~Cerf, Phys.\ Rev.\ Lett.\ {\bf 102}, 120501 (2009).

\bibitem{Tak14} M. Takeoka and S. Guha, Phys. Rev. A {\bf 89}, 042309 (2014).

\bibitem{Rec94} M. Reck, A. Zeilinger, H. J. Bernstein, and P. Bertani, \prl~{\bf 73}, 58 (1994).


\bibitem{Dut10} Z. Dutton, J. H. Shapiro, and S. Guha, J. Opt. Soc. Am. B 
{\bf 27}, A63 
 (2010).

\bibitem{Zha14} Y.-C. Zhang, Z. Li, C. Weedbrook, S. Yu, W. Gu, M. Sun, X. Peng, and H. Guo, J. Phys. B {\bf 47}, 035501 (2014). 



%






\end{thebibliography}
\end{document}